\documentclass[sigconf]{aamas} 
\usepackage{balance}
\usepackage{xparse}
\usepackage{xargs}

\usepackage[
    disable, %
    textsize=tiny,
]{todonotes}

\newcommandx{\bleg}[2][1=]{\todo[linecolor=blue,backgroundcolor=blue!25,bordercolor=blue,#1]{Q: #2}}
\newcommand{\citeps}{\cite}
\newcommand{\citets}{\cite}

\usepackage{amsmath}
\usepackage{amsfonts}
\usepackage{amsthm}
\usepackage{mathtools}
\usepackage{nicefrac}

\usepackage{graphicx} %
\usepackage{verbatim}
\usepackage{algpseudocode} %
\usepackage{algorithm}
\usepackage{pgfplots}
\pgfplotsset{compat=1.17}

\usepackage{hyperref}

\usepackage{enumitem}
\usepackage{microtype}
\usepackage{cleveref}

\usepackage{changepage}

\usepackage{subcaption}
\usepackage{booktabs} %
\usepackage{tikz}
\usetikzlibrary{calc}

\setcopyright{rightsretained}
 \acmConference[GAIW'26]{Appears at the 8th Games, Agents, and Incentives Workshop (GAIW-26). Held as part of the Workshops at the 25th International Conference on Autonomous Agents and Multiagent Systems.}{May 2026}{Paphos, Cyprus}{Armstrong, Curry, Hosseini, Mattei, Tsang, Wąs (Chairs)} 
\acmISBN{}

\acmSubmissionID{40}

\NewDocumentCommand{\Reals}{}{\mathbb{R}}
\NewDocumentCommand{\bools}{}{\{0,1\}}%

\DeclareMathOperator*{\argmax}{arg\,max}

\DeclareMathOperator*{\finset}{finset}

\NewDocumentCommand{\Prob}{}{\mathbf{P}}
\NewDocumentCommand{\Probof}{m}{\Prob\left[#1\right]}
\NewDocumentCommand{\Expect}{}{\mathbf{E}}
\NewDocumentCommand{\Expectof}{m}{\Expect\left[#1\right]}
\NewDocumentCommand{\Expectofu}{o m}{\Expect_{#1}\left[#2\right]}
\NewDocumentCommand{\Powerset}{m}{\mathcal{P}\left(#1\right)}

\NewDocumentCommand{\Actions}{}{\mathcal{X}}
\NewDocumentCommand{\Samples}{}{\Omega}
\NewDocumentCommand{\Sigalg}{}{\mathcal{F}}

\NewDocumentCommand{\infogood}{}{\mathbf{I}}
\NewDocumentCommand{\zero}{}{\mathbf{0}}

\NewDocumentCommand{\BuyerContext}{}{\textsc{BuyerContext}}
\NewDocumentCommand{\InfoOffer}{}{\textsc{InfoOffer}}
\NewDocumentCommand{\InfoOffers}{}{\textsc{InfoOffers}}
\NewDocumentCommand{\BuyerContexts}{}{\textsc{BuyerContexts}}
\NewDocumentCommand{\Seller}{}{\textsc{Seller}}
\NewDocumentCommand{\Sellers}{}{\textsc{Sellers}}
\NewDocumentCommand{\RIP}{}{\textsc{RIP}}
\NewDocumentCommand{\LLM}{}{\texttt{LLM}}
\NewDocumentCommand{\prompt}{}{\texttt{prompt}}

\NewDocumentCommand{\infooffers}{}{\mathcal{I}}

\NewDocumentCommand{\infonomyurl}{}{\url{https://anonymous.4open.science/r/infonomy-server-5668/}}
\NewDocumentCommand{\infonomy}{}{\texttt{infonomy-server}}

\theoremstyle{plain}
\newtheorem{theorem}{Theorem}[section]
\newtheorem{nontheorem}[theorem]{Non-theorem}

\newtheorem{lemma}[theorem]{Lemma}

\theoremstyle{definition}
\newtheorem{definition}[theorem]{Definition}
\newtheorem{example}[theorem]{Example}

\theoremstyle{remark}

\newtheorem*{notation*}{Notation}

\newcommand{\Struct}[1]{\State \textbf{class} \textsc{#1}}
\newcommand{\EndStruct}{\State \textbf{end class}}
\newcommand{\LeftComment}[1]{\State \(\triangleright\) #1}

\title[Recursive Information Markets]{Extrapolating Volition with Recursive Information Markets}

\author{Abhimanyu Pallavi Sudhir}
\affiliation{
  \institution{University of Warwick}
  \city{Coventry}
  \country{United Kingdom}}
\email{abhimanyu.pallavi-sudhir@warwick.ac.uk}

\author{Long Tran-Thanh}
\affiliation{
  \institution{University of Warwick}
  \city{Coventry}
  \country{United Kingdom}}
\email{long.tran-thanh@warwick.ac.uk}

\keywords{rlhf, economics, information markets, scalable oversight, information asymmetry, language models}

\begin{abstract}  
 
\noindent\textbf{Background:} A key challenge in information economics and AI alignment is that of efficiently valuing or scoring information supplied by a seller or language model that is potentially more-informed than the buyer or evaluator. This is known as the problem of ``information asymmetry'' in economics or ``scalable oversight'' in AI alignment.

\noindent\textbf{Objectives and Research Questions:} We ask how to formalize value-of-information under recursive inspection, whether deeper inspection can be made decision-theoretically principled, and how such mechanisms can support scalable oversight beyond standard RLHF.

 \noindent\textbf{Methods:} We introduce a Bayesian framework for recursive information valuation, compare a naive successive protocol to a recursive protocol modeled as an imperfect-recall game, prove ex-ante optimality against admissible protocols, and analyze a marginal-value reward mechanism for scalable oversight within our Bayesian framework.

\noindent\textbf{Results:} We show ex-post inspection alone can still disincentivize corrective context, provide a counterexample to naive recursion, prove the Recursive Inspection Protocol is ex-ante superior to any admissible purchase protocol, characterize equilibrium behavior for the marginal-value mechanism, and present a working server implementation (\texttt{infonomy-server}).

\noindent\textbf{Conclusions:} Recursive inspection offers a principled way to price information under persistent asymmetry and a practical path for market-based oversight; however, our current scalable-oversight mechanism remains imperfect, motivating tighter future guarantees on equilibrium shortfall.

\end{abstract}

\begin{document}

\maketitle 

\section{Introduction}

A key challenge in both economics and machine learning is the development of mechanisms for efficiently pricing information \citeps{stigler1961economics}. In settings where ground truth is available (e.g. supervised learning, or prediction markets for well-defined events), information may be priced with proper scoring rules \citep{hansonLogarithmicMarketScoring2002}; when ground truth is not available, one must rely on human buyers (an information market) or evaluators (e.g. reinforcement learning via human feedback/RLHF) to price it.

The core obstacle in valuing information based on subjective preferences is \emph{information asymmetry}: the seller (or information-giver) by definition possesses information the buyer (or evaluator) doesn't, which leads to a ``Market of Lemons'' as famously described in \citeps{lemons}, so the prices given by the buyer only reflect her superficial preferences (based on the information she has) rather than what her true preferences would be with full information. An analogous problem arises in AI alignment: techniques such as RLHF fundamentally rely on a human's ability to evaluate the outputs of increasingly capable and eventually superhuman AI models \citeps{burnsWeaktoStrongGeneralizationEliciting2024,caspar2023openrlhf,sudhir2025ScalableOversightBenchmark}---this is known as the problem of \emph{scalable oversight}.

Recently, \citets{weissRedesigningInformationMarkets2024} proposed the \emph{Information Bazaar}: an information market mechanism that mitigates information asymmetry \textbf{by using Large Language Model (LLM) agents to make purchase decisions}. Specifically, the mechanism addresses the \emph{buyer's inspection paradox} \citep{arrow1972economic,vanAlstyne1999AproposalValuingInformationInstrumentalGoods}: the problem that, unlike with buying other goods, someone buying information definitionally does not \emph{know} what the information she is about to buy is\footnote{The seller could of course reveal part of the information, e.g. ``metadata'' to advertise the information---which creates a trade-off between information asymmetry and having positive externalities}. Their mechanism lets the buyer use an LLM agent to ``inspect'' the information and make the purchase decision with full knowledge of the information.

In this work, we introduce a formal \textbf{Bayesian framework for analyzing mechanisms to score information under information asymmetry}, and use this to study both market mechanisms (like in \citets{weissRedesigningInformationMarkets2024}) and scoring rules that can be used to train AI models (such as Language Models) to supply more valuable information.

Our findings and contributions are as follows:

\begin{enumerate}
  \item \textbf{Recursive inspection protocol.} We observe that the Information Bazaar mechanism in \citets{weissRedesigningInformationMarkets2024} does not eliminate information asymmetry, because the LLM buyer inspecting information can still lack other pieces of information that are known to the buyer and correlated with the one it is purchasing. We build on their work, and discover that the straightforward method of ``simply applying \citets{weissRedesigningInformationMarkets2024} to itself'' is too simplistic---and instead introduce a more robust protocol we call the \emph{Recursive Inspection Protocol}, formulated as an imperfect-recall game.
  \item \textbf{Scalable oversight mechanism.} We express the scalable oversight problem in our Bayesian framework, and construct an example scalable oversight mechanism that generalizes the ``AI safety via market-making'' proposal \citeps{evanhubingerAISafetyMarket2020} to problems beyond binary forecasting.
  \item \textbf{Practical implementation.} We provide an implementation of an information market server implementing the Recursive Inspection Protocol, detailed in \Cref{sec:practical}, which can directly be applied to various practical applications for information markets such as question-and-answer sites, product inspections and online fact-checking.
\end{enumerate}

\subsection{Related work}
\label{sec:related}

\paragraph{Information economics.} The foundational theory of information economics\bleg{feel like this might elicit some annoying comments from reviewers} is the \emph{value-of information} framework \citeps{kuhn1953information, howard1966information, raiffa1961AppliedStatisticalDecisionTheory}, which treats information as an instrumental good \citeps{vanalstyneProposalValuingInformation1999}. The buyer's inspection paradox was introduced by \citets{arrow1972economic} and named by \citets{vanalstyneProposalValuingInformation1999}. \citets{hansonIPBarbedWire2011,hansonRahEfficientIP2011} discussed the inefficiency of information markets in the context of intellectual property (IP) law, commenting: ``\emph{just as farmers developed barbed-wire, someday I expect IP advocates will develop better forms of intellectual property}''.

\paragraph{Mechanism design for information markets.} Simple, naive information markets suffer from a number of flaws: \emph{the low cost of duplication} \citeps{samuelson2009economics}, \emph{the transaction costs of tenders} \citeps{tenders}, the transaction cost of \emph{learning} new information, and \emph{information asymmetry} \citeps{arrow1972economic,vanalstyneProposalValuingInformation1999}. \citets{conitzerPredictionMarketsMechanism2012} introduced a mechanism for rewarding information-providing agents based on their influence on prediction market prices, though this is only applicable in contexts where ground-truth (prediction market resolutions) is available.

\paragraph{Scalable oversight} The fundamental limitation of RLHF that it relies on a human's ability to judge a (potentially superhuman) AI's outputs has long been recognized \citeps{christiano2018IDA}, and is known as the \emph{scalable oversight} problem in the AI alignment literature \citeps{hubinger2020alignmentProposalsComplexityClasses,bowmanMeasuringProgressScalable2022}. One well-known scalable oversight proposal is Debate \citeps{irvingAISafetyDebate2018}; our proposal to augment RLHF with information markets may be seen as yet another such proposal.

\paragraph{Mechanism design with LLMs.} Market design for LLM participants has opened up many new frontiers previously not possible with humans alone. Apart from the information bazaar \citeps{weissRedesigningInformationMarkets2024}, this includes e.g. token auctions for online advertising \citeps{duttingMechanismDesignLarge2024}, economic simulations with AI agents \citeps{liEconAgentLargeLanguage2024a}, and forecasting with LLMs \citeps{schoeneggerWisdomSiliconCrowd2024,halawiApproachingHumanLevelForecasting2024,palekaConsistencyChecksLanguage2024,buterin2024InfoFinance}.

\paragraph{Zero-knowledge proofs.} Another way for a seller to prove the value of his information without revealing it is via a \emph{zero-knowledge proof} \citeps{goldreich2001FoundationsCryptographyV1} -- in formal settings, this is available if the information is a solution to a PSPACE problem \citeps{zkPSPACE1,zkPSPACE2}; extending this to informal settings is an active area of work \citeps{hammondNeuralInteractiveProofs2024}.

\paragraph{Miscellaneous} Recent works like \citets{manelli2006Bundling,babiaoff2012OptimalMechanismsSellingInformation,bergemannDesignPriceInformation2018} have studied ``optimal mechanisms for selling information'', but from the point-of-view of a seller maximizing his revenue -- we, on the other hand, are interested in improving the efficiency of the information market itself. Information market mechanisms have also been designed for \emph{data markets} in machine learning, e.g. \citets{fallah2024ThreeLayerDataMarkets,chenSellingDataMachine2022,ghorbani2019DataShapley}.\bleg{all papers in this para are totally irrelevant but I suspect reviewers may bring them up}

\section{Bayesian setting}
\label{sec:bayes}

We are concerned with modeling an agent $\alpha$'s ``value for information'' in an expected utility maximization framework. We assume a probability space $(\Samples,\Sigalg,\Prob)$ (with $\Prob$ a common prior for all agents ever discussed) and that the only source of utility is some decision problem specified by a set of choices $\Actions$ for $\alpha$ and payoffs given by a measurable utility function $U:\Samples\times\Actions\to\Reals$. An ``information good'' is a tuple of a random variable $I$, its realization or ``true value''\footnote{Including $i$ in the tuple is to simplify notation when talking about taking conditional expectations on $I$. It is not necessary: instead of writing $\Expectof{U(x)\mid I=i}$ we can think of $\Expectof{U(x)\mid I}$ as itself a random variable correlated with $I$} $I(\omega)=i$ and a price: $\infogood=\langle I, i, p\rangle$. All of these contents of an information good may be hidden from $\alpha$. We want to study how $\alpha$ values informational goods.

With no further information, $\alpha$'s choice would maximize its utility over its prior, i.e. choose $\argmax_{x\in\Actions}\Expectof{U(x)}$. With information $\langle I, i, p\rangle$, the agent would instead maximize its utility over its posterior, i.e. $\argmax\Expectof{U(x)\mid I=i}$. Thus we can say the utility of that information is:

\begin{multline}
  U^1(\infogood)=U(\argmax\Expectof{U(x)\mid I=i}) \\
  - U(\argmax\Expectof{U(x)}) - p
  \label{eq:voi-true}
\end{multline}

If $\alpha$ has to decide whether to buy an information good (or decide which one to buy out of a list of offers), it has to take the expectation of $U^1(\infogood)$. There are several ways to do this. There is the \emph{ex-post} value of information $\Expectof{U^1(\infogood)|I=i}$, which is $\alpha$'s estimate for $\infogood=\langle I,i,p\rangle$ after viewing it:

\begin{multline}
\Expectof{U^1(\infogood)|I=i}=\max_{x\in\Actions}\Expectof{U(x)\mid I=i} \\
- \Expectof{U\left(\argmax_{x\in\Actions}\Expectof{U(x)}\right)\mid I=i} - p
\label{eq:voi-ex-post}
\end{multline}  

The \emph{ex-ante} value (before seeing the information) may be taken as the expectation over \Cref{eq:voi-ex-post}: $\Expectofu[i\sim I]{\Expectof{U^1(\infogood)|I=i}}$ (which would be the \emph{value of an experiment} \citeps{lindley1956OnMeasureInformationExperiment}), though if $\alpha$ is not even aware in advance which random variable $I$ will be revealed by information good $\infogood$, then we would further need to assume a prior over the process generating $\infogood$ and take the expetation over it, i.e. $\Expectofu[\infogood\sim \Probof{\infogood}]{\Expectofu[i\sim I]{\Expectof{U^1(I\mid I=i)}}}$. In the absence of any inspection of the information before purchasing/scoring it, that ex-ante value would be our value for $\infogood$.

In the case where $\alpha$ is an evaluator providing human feedback to an AI model, the evaluator can see the information before scoring it. The same is the case in the Information Bazaar of \citets{weissRedesigningInformationMarkets2024}, where an LLM agent inspects the information before deciding whether to buy it for its principal. In these settings, $\alpha$ will value the information at its ex-post value $\Expectof{U^1(\infogood)|I=i}$.

However, ex-post VOI is still not enough: while knowing $I=i$ (inspecting $\infogood$) provides \emph{some} information on $U(\infogood)$, it does not provide all of it; there may still be information asymmetry, because $U(\infogood)$ is itself a random variable not completely determined by $\infogood$ (much like in ordinary goods markets where you can inspect the good but still have information asymmetry). The following example demonstrates this.

\begin{example}

Consider a forecasting decision problem where the agent reports a probability $x\in[0,1]$ for an event $E$, with log-score utility
  \[
  U(x)=
  \begin{cases}
  \log x, & \text{if } E \text{ occurs}\\
  \log(1-x), & \text{otherwise.}
  \end{cases}
  \]
  Suppose $\Prob(E)=0.1$, and there are two random variables $I_1,I_2$ such that
  \[
  \Prob(E\mid I_1=1)=0.4,\qquad \Prob(E\mid I_1=1,I_2=1)=0.2.
  \]
  One concrete joint distribution with these properties is shown in \Cref{tab:factcheck-example}.

  \begin{table}[h]
  \centering
  \caption{A distribution exhibiting the fact-checking effect.}
  \label{tab:factcheck-example}
  \begin{tabular}{cccc}
  \toprule
  $\Prob(I_1,I_2)$ & $I_1$ & $I_2$ & $\Prob(E\mid I_1,I_2)$ \\
  \midrule
  $15/32$ & $0$ & $0$ & $0.08$ \\
  $15/32$ & $0$ & $1$ & $0.08$ \\
  $1/24$  & $1$ & $0$ & $0.5$ \\
  $1/48$  & $1$ & $1$ & $0.2$ \\
  \bottomrule
  \end{tabular}
  \end{table}

  Now suppose an information seller knows that $(I_1,I_2)=(1,1)$. If the seller reveals only $I_1=1$, the buyer updates from $0.1$ to $0.4$, so the ex-post gain is $\log(0.4)-\log(0.1)$. If the seller reveals both $(I_1,I_2)=(1,1)$, the buyer updates from $0.1$ to $0.2$, yielding only $\log(0.2)-\log(0.1)$. Hence the seller is incentivized to reveal only $I_1$.

  This illustrates a fact-checking failure mode: $I_1$ can be interpreted as a persuasive claim and $I_2$ as additional context that weakens that claim. Under a mechanism that rewards only immediate ex-post value, providing the corrective context is disincentivized.

  (\emph{Sidenote}: Our presentation of ``false'' claims may be a bit confusing. Since we're dealing in a Bayesian setting, we do not suppose that the AI/information-seller can directly give a false value for a random variable---rather, the random variable $I_1$ may be interpreted as ``what the AI says about some underlying (not directly observed) random variable $J_1$'' etc.)
\end{example}

We instead provide two mechanisms: one, (1) Recursive Information Protocol, for valuing information in markets with information asymmetry (\cref{sec:market}) and (2) a scalable oversight or scoring mechanism to supply ``more fully informed'' human feedback to AI models during training (\cref{sec:scoring}). 

\section{Market mechanisms}
\label{sec:market}

One way to think of this persistence of information asymmetry is: \Cref{eq:voi-true} creates a \emph{new} decision problem for $\alpha$, that of deciding whether to buy $\infogood$ -- or which information to buy out of a list of offers. The set of information goods offered $\infooffers=\{\infogood_1,\dots\infogood_k\}$ is a new decision problem, with utilities of each choice now given by the (unknown/random, much like $U(x)$) true value-of-information $U(\infogood)$. Thus we may be offered another set of information goods $\{\infogood_1^1,\dots\infogood_k^1\}$ to help us with this decision, ad recursum. In general we have a sequence of decision problems $\Actions^{n+1}=\Powerset{\infooffers^n}$ where $\infooffers^n:=\{\infogood_1^n,\dots\infogood_k^n\}$ are the information goods offered to us to help us decide $\Actions^n$, and each choice corresponds to choosing a subset of that information to help decide $\Actions^n$, where $\Actions^0 =\Actions$ and $\Actions^1=\Powerset{\infooffers^0}=\Powerset{\infooffers}$.

There are two ways that we can set up these recursive problems. The first we call the \emph{successive inspection protocol}, which we describe in \Cref{app:naive}. However, this approach, while conceptually simpler, has its limitations---and in \Cref{app:nonnaive} we will instead introduce the superior \emph{recursive inspection protocol}.

\subsection{Successive Inspection Protocol}
\label{app:naive}

The successive inspection protocol protocol arises from ``simply applying \citet{weissRedesigningInformationMarkets2024} to itself'': i.e. we model each decision problem as having its own utility function $U^n:\Actions^n\to\Reals$ based on its instrumental utility for the previous decision problem $\Actions^{n-1}$.

\begin{table}[h]
\centering
\caption{\emph{Successive decision problems, naive approach}}
\vspace{0.2em}
\label{tab:sdp}
\begin{tabular}{lll}
\vspace{0.2em}
\textbf{Choice set} & \textbf{True utilities} & \textbf{Information Offers} \\

\vspace{0.2em}
$\Actions$ & $U:\Actions\to\Reals$ & $\infooffers=\{\zero,\infogood_1,\infogood_2,\dots\}$ \\

\vspace{0.2em}
$\Actions^1=\infooffers$ & $U^1:\Actions^1\to\Reals$ & $\infooffers^1=\{\zero,\infogood^1_1,\infogood^1_2,\dots\}$ \\

\vspace{0.2em}
$\Actions^2=\infooffers^1$ & $U^2:\Actions^2\to\Reals$ & $\infooffers^2=\{\zero,\infogood^2_1,\infogood^2_2\dots\}$ \\
$\dots$ & $\dots$ & $\dots$ \\
\end{tabular}
\end{table}

\begin{multline}
U^{n+1}\left(\langle I, i, p\rangle\right) := U^n\left(\argmax_{x\in\Actions^n}\Expectof{U^n(x)\mid I=i}\right) \\
- U^n\left(\argmax_{x\in\Actions^n}\Expectof{U^n(x)}\right) - p
\label{eq:voi-rim}
\end{multline}

These choice sets and utilities are very similar to the decision problems created by the \emph{recursive inspection protocol} described in \Cref{alg:rim} and in the main body; except that each action $x^n\in\Actions^n$ is made only consulting the information chosen in $x^{n+1}\in\Actions^{n+1}$. We can then say that the decisions made under this protocol are:

\begin{equation}
x^n_*=\argmax_{x\in\Actions^n} \Expectof{
  U^n(x)\mid x^{n+1}_*
}
\label{eq:rim-choices}
\end{equation}

In general this is an ill-defined infinite recursion. But finite restrictions of this are natural, stopping at some fixed $x^N_{*:N}=\argmax_{x\in\Actions_N}\Expectof{U^N(x)}$ (you can think of this stopping as caused by the transaction costs of inspection), so that for $n<N$:

\begin{equation}
x^n_{*:N}=\argmax_{x\in\Actions^n} \Expectof{
  U^n(x)\mid x^{n+1}_{*:N}
}
\label{eq:rim-choices-fin}
\end{equation}

While this approach is suitable for settings where all sellers have identical information (e.g. in an AI alignment setting), is it fails to account for the possibility that a choice $x^n$ can directly (i.e. not through its impact on $x^{n-1}$) impact a choice $x^m$ where $m<n-1$, as demonstrated in the following example.

\begin{proof}[Counter-example]
Consider the decision problem with action set $\Actions^0=\{x_0, x_1, x_2\}$. We interpret $x_0$ as ``eat raw legume'', $x_1$ as ``eat rice'' and $x_2$ as ``eat boiled legume''. In reality we have $U(x_2)>U(x_1)>>U(x_0)$ (rice is unhealthy, but raw legumes are toxic). 

Suppose the first-level information offers are $\infooffers^0=\{I^1_0,I^1_1\}$, where $I^1_0$ states ``legumes are toxic'' and $I^1_1$ states ``rice is unhealthy''. Suppose the second-level information offers are $\infooffers^1=\{I^2_0\}$, where $I^2_0$ states ``the toxins in legumes can be removed by boiling''. All of these information offers could even be free.

Then the optimal action is $(x_2,\{I^1_0,I^1_1\},\{I^2_0\})$; however, if the information bought in level-2 $\{I^2_0\}$ is not available while deciding $x^0$, the best the agent can do is $(x_1,\{I^1_0\},\{I^2_0\})$ to prevent itself from eating raw legumes (since it will not know that the toxins can be removed by boiling).
\end{proof}

\subsection{Recursive Inspection Protocol}
\label{app:nonnaive}

Instead our approach, the \emph{recursive information protocol} implemented in \Cref{sec:practical}, allows the agent (or rather the LLM subcontracted by the agent) to retain the full sequence of information bought in the recursive steps $x^{n+1}_*,\dots x^N$ while making the decision $x^n\in\Actions^n$ (where $N$ is some pre-defined finite depth we recurse till); furthermore $x^n$ is decided keeping in mind the full traceback of decision problems $\Actions^0,\dots\Actions^{n-1}$ that may be influenced by this decision. This is naturally modelled as an \emph{imperfect recall game}\footnote{The decision-theoretic considerations in imperfect recall games do not matter to us, as the agent's actions themselves are not being forgotten --- only the information offers} \citeps{imperfectRecall2024} where we first decide $x_*^N\in\Actions^N$ with full information $\infooffers^0\cup\dots\infooffers^{N-1}$, then $x_*^{N-1}\in\Actions^{N-1}$ with information $\infooffers^0\cup\dots\infooffers^{N-2}\cup x_*^{N}$ and so on until we finally decide $x_*^0\in\Actions^0$ with information $x_*^1\cup\dots \cup x_*^N$. This is shown in \Cref{fig:imperfect}.

A node $(x^n,\dots x^N)$ corresponds to the state where the agent has purchased $(x^n,\dots x^N)$ and is choosing some $x^{n-1}\in\Actions^{n-1}$. For a Bayesian agent, we can thus recursively give the ``value of being at a node'':

\begin{multline}
  U(x^n,\dots x^N) = U\Big(
    \argmax_{x\in\Actions^{n-1}}
    \Expect\bigl[
      U(x,x^n,\dots x^N) \mid \\
      \infooffers^0\cup\dots\cup\infooffers^{n-2}\cup
      x^n\cup\dots\cup x^N
    \bigr],
    x^n,\dots x^N
  \Big)
  \label{eq:imperfect-u}
\end{multline}
\begin{equation}
  U(x^0,\dots x^N) = U(x^0) - \sum_{n=1}^N \sum_{\substack{\langle I, i, p\rangle \\ \in x^n}} p
  \label{eq:imperfect-u0}
\end{equation}

This then completely specifies the behavior of a Bayesian agent performing a depth-$N$ recursive inspection:

\begin{multline}
  x^n_*=\argmax_{x\in\Actions^n}\Expect\bigl[
      U(x,x^{n+1},\dots x^N) \mid
      \infooffers^0\cup\dots \\
      \cup\infooffers^{n-1}\cup
      x^{n+1}\cup\dots\cup x^N
  \bigr]
  \label{eq:imperfect-x}
\end{multline}

\begin{figure}
  \centering
  \resizebox{\columnwidth}{!}{
  \def\levely{106}              %
\def\chosenx{-38}             %
\def\otherxstart{33}          %
\def\otherxskip{49}           %
\def\labelside{left}          %
\def\labeldist{4pt}          %
\def\arrowlabelpos{0.5}       %
\def\specialarrowpos{0.8}     %
\def\lastlayerfrac{0.6}       %
\def\chosencolor{black}       %
\def\othercolor{gray!75}      %
\def\coverrectlayer{3}        %

\tikzset{every picture/.style={line width=0.75pt}}

\begin{tikzpicture}[x=0.75pt,y=0.75pt,yscale=-1,xscale=1]

\coordinate (L0) at (331.71, 58.43);
\fill[\chosencolor] (L0) circle (4.71pt);
\node[align=left,\labelside=\labeldist] at (L0) {$\mathcal{I}^{0} \cup \dotsc \cup \mathcal{I}^{N-1}$};

\coordinate (L1C) at ($(L0) + (\chosenx, \levely)$);
\coordinate (L1O1) at ($(L0) + (\otherxstart, \levely)$);
\coordinate (L1O2) at ($(L0) + (\otherxstart + \otherxskip, \levely)$);
\coordinate (L1O3) at ($(L0) + (\otherxstart + 2*\otherxskip, \levely)$);

\fill[\chosencolor] (L1C) circle (4.71pt);
\fill[\othercolor] (L1O1) circle (4.71pt);
\fill[\othercolor] (L1O2) circle (4.71pt);
\fill[\othercolor] (L1O3) circle (4.71pt);

\draw[->, >=latex, line width=0.75pt, \chosencolor] (L0) -- (L1C) 
    node[pos=\arrowlabelpos, left] {$x_{*}^{N}$};
\draw[->, >=latex, line width=0.75pt, \othercolor] (L0) -- (L1O1) 
    node[pos=\arrowlabelpos, right, \othercolor] {$x_{1}^{N}$};
\draw[->, >=latex, line width=0.75pt, \othercolor] (L0) -- (L1O2) 
    node[pos=\arrowlabelpos, right, \othercolor] {$x_{2}^{N}$};
\draw[->, >=latex, line width=0.75pt, \othercolor] (L0) -- (L1O3) 
    node[pos=\arrowlabelpos, right, \othercolor] {$x_{3}^{N}$};
\node[right=3pt, \othercolor] at ($(L1O3) + (10pt, 0)$) {$\dotsc$};

\node[align=left, \labelside=\labeldist] at (L1C) 
    {$\begin{array}{l}\mathcal{I}^{0} \cup \dotsc \cup \mathcal{I}^{N-2}\\\cup x_{*}^{N}\end{array}$};

\coordinate (L2C) at ($(L1C) + (\chosenx, \levely)$);
\coordinate (L2O1) at ($(L1C) + (\otherxstart, \levely)$);
\coordinate (L2O2) at ($(L1C) + (\otherxstart + \otherxskip, \levely)$);
\coordinate (L2O3) at ($(L1C) + (\otherxstart + 2*\otherxskip, \levely)$);

\fill[\chosencolor] (L2C) circle (4.71pt);
\fill[\othercolor] (L2O1) circle (4.71pt);
\fill[\othercolor] (L2O2) circle (4.71pt);
\fill[\othercolor] (L2O3) circle (4.71pt);

\draw[->, >=latex, line width=0.75pt, \chosencolor] (L1C) -- (L2C) 
    node[pos=\arrowlabelpos, left] {$x_{*}^{N-1}$};
\draw[->, >=latex, line width=0.75pt, \othercolor] (L1C) -- (L2O1) 
    node[pos=\arrowlabelpos, right, \othercolor] {$x_{1}^{N-1}$};
\draw[->, >=latex, line width=0.75pt, \othercolor] (L1C) -- (L2O2) 
    node[pos=\arrowlabelpos, right, \othercolor] {$x_{2}^{N-1}$}; %
\draw[->, >=latex, line width=0.75pt, \othercolor] (L1C) -- (L2O3) 
    node[pos=\arrowlabelpos, right, \othercolor] {$x_{3}^{N-1}$}; %
\node[right=3pt, \othercolor] at ($(L2O3) + (10pt, 0)$) {$\dotsc$};

\node[align=left, \labelside=\labeldist] at (L2C) 
    {$\begin{array}{l}\mathcal{I}^{0} \cup \dotsc \cup \mathcal{I}^{N-3}\\\cup x_{*}^{N-1}\cup x_{*}^{N}\end{array}$};

\coordinate (L3C) at ($(L2C) + (\chosenx, \levely)$); %
\coordinate (L3O1) at ($(L2C) + (\otherxstart, \levely)$);
\coordinate (L3O2) at ($(L2C) + (\otherxstart + \otherxskip, \levely)$);
\coordinate (L3O3) at ($(L2C) + (\otherxstart + 2*\otherxskip, \levely)$);

\fill[\chosencolor] (L3C) circle (4.71pt);
\fill[\othercolor] (L3O1) circle (4.71pt);
\fill[\othercolor] (L3O2) circle (4.71pt);
\fill[\othercolor] (L3O3) circle (4.71pt);

\draw[->, >=latex, line width=0.75pt, \chosencolor] (L2C) -- (L3C) 
    node[pos=\specialarrowpos, left] {$x_{*}^{1}$};
\draw[->, >=latex, line width=0.75pt, \othercolor] (L2C) -- (L3O1) 
    node[pos=\specialarrowpos, right, \othercolor] {$x_{1}^{1}$}; %
\draw[->, >=latex, line width=0.75pt, \othercolor] (L2C) -- (L3O2) 
    node[pos=\specialarrowpos, right, \othercolor] {$x_{2}^{1}$}; %
\draw[->, >=latex, line width=0.75pt, \othercolor] (L2C) -- (L3O3) 
    node[pos=\specialarrowpos, right, \othercolor] {$x_{3}^{1}$}; %
\node[right=3pt, \othercolor] at ($(L3O3) + (10pt, 0)$) {$\dotsc$};

\node[align=left, \labelside=\labeldist] at (L3C) 
    {$x_{*}^{1}\cup \dotsc \cup x_{*}^{N}$};

\coordinate (L2_rect_start) at ($(L2C)!.2!(L3C)$);
\coordinate (L2_rect_end) at ($(L2C)!.65!(L3C)$);
\filldraw[fill=white, draw=none] ($(L2_rect_start) - (110, 0)$) rectangle ($(L2_rect_end) + (110, 0)$);

\node[\chosencolor] at ($(L2C) + (0.425*\chosenx, 0.425*\levely)$) {$\dotsc$};

\coordinate (L4C) at ($(L3C) + (\chosenx, \levely * \lastlayerfrac)$);
\coordinate (L4O1) at ($(L3C) + (\otherxstart, \levely * \lastlayerfrac)$);
\coordinate (L4O2) at ($(L3C) + (\otherxstart + \otherxskip, \levely * \lastlayerfrac)$);
\coordinate (L4O3) at ($(L3C) + (\otherxstart + 2*\otherxskip, \levely * \lastlayerfrac)$);

\fill[\chosencolor] (L4C) circle (4.71pt);
\fill[\othercolor] (L4O1) circle (4.71pt);
\fill[\othercolor] (L4O2) circle (4.71pt);
\fill[\othercolor] (L4O3) circle (4.71pt);

\draw[->, >=latex, line width=0.75pt, \chosencolor] (L3C) -- (L4C) 
    node[pos=\arrowlabelpos, left] {$x_{*}^{0}$};
\draw[->, >=latex, line width=0.75pt, \othercolor] (L3C) -- (L4O1) 
    node[pos=\arrowlabelpos, left, \othercolor] {$x_{1}^{0}$};
\draw[->, >=latex, line width=0.75pt, \othercolor] (L3C) -- (L4O2) 
    node[pos=\arrowlabelpos, left, \othercolor] {$x_{2}^{0}$};
\draw[->, >=latex, line width=0.75pt, \othercolor] (L3C) -- (L4O3) 
    node[pos=\arrowlabelpos, left, \othercolor] {$x_{3}^{0}$};
\node[right=3pt, \othercolor] at ($(L4O3) + (10pt, 0)$) {$\dotsc$};

\node[below=8pt] at (L4C) {$U\left(x_{*}^{0}, \dotsc, x_{*}^{N}\right)$};

\end{tikzpicture}
  }
  \caption{Recursive Inspection as an imperfect recall game; nodes are labelled by the information available for making the decision at that node. Note how the decision tree is in the reverse order of the inspection order: $x^N$ is decided first, and $x^0$ last.}
  \label{fig:imperfect}
\end{figure}
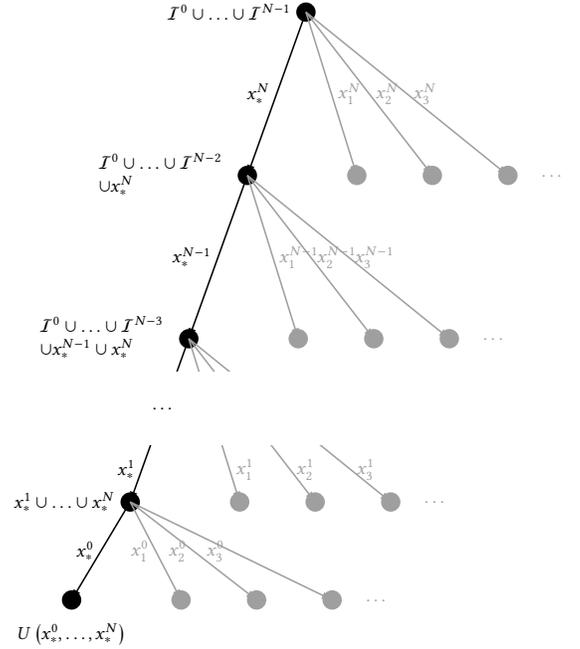

In what sense is this algorithm ``optimal''?\todo{phrase this better} It is instructive to first consider what notions of optimality \emph{don't} hold for the recursive inspection protocol:

\begin{nontheorem}
We might think that the resulting sequence $(x^0_*,\dots x^N_*)$ is optimal given all the information present in the system $\infooffers^0,\dots\infooffers^{N-1}$, i.e.
\begin{equation*}
  \mathbf{x}_* = \argmax_{\mathbf{x}\in\prod\Actions}\Expectof{
    U(\mathbf{x})\mid \infooffers^0\cup\dots\cup\infooffers^{N-1}
  }
\end{equation*}
\end{nontheorem}
\begin{proof}[Counter-example]
It is always best to simply buy $x^0_*$ and none of the subsequent $x^n_*$s that facilitated this decision. The same argument applies at any level, thus we cannot even make any claim like ``$\mathbf{x}_*$ is optimal given the information purchased''.
\end{proof}

Instead, our algorithm seems to be optimal in a ``bounded rationality'' sense\todo{bounded rationality ref}: optimal in a way that also accounts for the \emph{costs} of acquiring the information that would help improve our decision. One way to phrase this is: ex-ante (not knowing the information it is about to be offered), an agent would prefer to use this protocol compared to any other protocol.

\begin{definition}[Admissible purchase protocol]
  an ``admissible purchase protocol'' is a list of functions $\xi^n$ mapping a decision problem $\Actions^0$ and a sequence of information offer sets $\infooffers^0,\dots\infooffers^{N-1}$ generated from it, to $\Actions,\Powerset{\infooffers^0},\dots\Powerset{\infooffers^{N-1}}$:
  \begin{align*}
    x^N &= \xi^N(\infooffers^0,\dots\infooffers^{N-1}) \\
    \dots&\\
    x^n &= \xi^n(\infooffers^0,\dots\infooffers^{n-1}, x^{n+1},\dots x^N) \\
    \dots&\\
    x^0 &= \xi^0(x^1,\dots, x^N) \\
  \end{align*}
  i.e. a decision cannot ``steal'' information offers specifically made to help improve that decision, it can only depend on \emph{purchased} information offers. We denote the tupled function as $(x^0,\dots x^N)=\xi(\infooffers^0,\dots\infooffers^{N-1})=\xi(\infooffers)$.
  \label{def:app}
\end{definition}

In order to express ``ex-ante expected utility'', the agent must have a prior on the $\infooffers^n$ that will be generated --- this can be achieved either by having a measurable map $\Samples\to\finset(\InfoOffers)^N$ (where $\InfoOffers=\Sigalg\times\bools\times\Reals$ is the type of information goods) or more simply by assuming the random variables $I^n_k$ revealed are fixed and only having a prior over their values. Both allow us to take an expectation over $\infooffers^0,\dots\infooffers^{N-1}$.

\begin{theorem}[Recursive Inspections are ex-ante superior to any admissible protocol]
  Let $x^n_*$ be the protocol described in \Cref{eq:imperfect-x,eq:imperfect-u,eq:imperfect-u0}. Then for any admissible purchase protocol $\xi$:
  \begin{multline*}
    \Expectofu[\infooffers^0,\dots\infooffers^{N-1}]{U(\xi(\infooffers))} \\
    \le \Expectofu[\infooffers^0,\dots\infooffers^{N-1}]{U(x_*(\infooffers))}
  \end{multline*}
  \label{thm:exa}
\end{theorem}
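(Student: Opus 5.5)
We prove the theorem by backward induction over the imperfect-recall tree of \Cref{fig:imperfect}, proceeding in the decision order $x^0, x^1, \dots, x^N$. Since $x^0$ is decided last, it is the first level to handle. Write $\mathcal{G}_n$ for the information available when $x^n$ is chosen:
\[
\mathcal{G}_n = \sigma\bigl(\infooffers^0\cup\dots\cup\infooffers^{n-1}\cup x^{n+1}\cup\dots\cup x^N\bigr).
\]
Admissibility (\Cref{def:app}) says precisely that $\xi^n$ is $\mathcal{G}_n$-measurable. The point to exploit is that $x^n_*$ in \Cref{eq:imperfect-x} maximizes the conditional expectation of the continuation value $U(x,x^{n+1},\dots,x^N)$ given $\mathcal{G}_n$. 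By the tower property, any $\mathcal{G}_n$-measurable choice therefore does no better in unconditional expectation.

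\textbf{Induction hypothesis.} For each $n$ and any fixed choices $x^{n+1},\dots,x^N$, let $\xi$ be any admissible protocol that uses these choices at levels above $n$ and is arbitrary at levels $\le n$. Then the expected terminal utility
\[
\Expectof{U(x^0,\dots,x^N)}
\]
is at most $\Expectof{U(x^n,\dots,x^N)}$, where $U(x^n,\dots,x^N)$ is the node value of \Cref{eq:imperfect-u}, for the same choices.

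\textbf{Base case $n=0$.} Here $\xi^0$ is $\mathcal{G}_0$-measurable, with $\mathcal{G}_0=\sigma(x^1\cup\dots\cup x^N)$. Pointwise,
\[
\Expectof{U(\xi^0,x^1,\dots,x^N)\mid\mathcal{G}_0}\le\max_x\Expectof{U(x,x^1,\dots,x^N)\mid\mathcal{G}_0}.
\]
Taking expectations gives $\Expectof{U(\xi(\infooffers))}\le\Expectof{U(x^0_*,x^1,\dots,x^N)}$.

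\textbf{Inductive step.} Suppose levels below $n$ have already been replaced by the recursive rule, so the expected payoff is $\Expectof{U(x^n,\dots,x^N)}$ with $x^n=\xi^n$. Because $\xi^n$ is $\mathcal{G}_n$-measurable, conditioning on $\mathcal{G}_n$ and applying the argmax property of \Cref{eq:imperfect-x} replaces $\xi^n$ by $x^n_*$ without decreasing the expectation.

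Iterating up to $n=N$, where $\mathcal{G}_N=\sigma(\infooffers^0\cup\dots\cup\infooffers^{N-1})$, replaces every level of $\xi$ by $x_*$ and yields the claimed inequality. Each replacement step is a single conditional Jensen or argmax comparison.

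The main obstacle is the ordering of the induction. Replacing $\xi^n$ by $x^n_*$ changes the information available to the lower levels, so those levels must already be in their optimal ``recursive'' form, as functions of whatever is purchased. This is why the replacement runs upward from level $0$, and why the value function \Cref{eq:imperfect-u} is defined with the argmax of the lower level built in.

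A subtle point must be checked. The node value conditions on $\infooffers^0\cup\dots\cup\infooffers^{n-2}\cup x^n\cup\dots\cup x^N$, and this set must contain the argument $x^n$ itself. Consequently $U(x,x^{n+1},\dots)$ with $x$ a free choice is a random variable, but it is a measurable function of $x$ and $\mathcal{G}_n$-observables. Interchanging the conditional expectation with the $\mathcal{G}_n$-measurable plug-in $x=\xi^n$ therefore requires a measurable-selection or finiteness argument. Finiteness holds here because the $\infooffers^n$ are finite sets, so the power sets are finite. If $\Actions^0$ is infinite, we would assume measurable argmax selection.
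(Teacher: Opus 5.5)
Your argument is the paper's own argument: hybrids that run $x_*$ on levels $0,\dots,n$ and $\xi$ above, with the node value \Cref{eq:imperfect-u} absorbing the lower levels and one argmax comparison per level. The gap is in the step that carries all the weight, namely that $x^n_*$ of \Cref{eq:imperfect-x} maximizes $\Expectof{U(x,x^{n+1},\dots,x^N)\mid\mathcal{G}_n}$. The paper's proof asserts the same thing without justification.

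Your $\mathcal{G}_n$ is generated by the \emph{random} purchase sets $x^{n+1},\dots,x^N$, and in the hybrid these are produced by $\xi^{n+1},\dots,\xi^N$. So $\mathcal{G}_n$ contains the revealed values $I=i$ of the goods in hand. It also contains whatever the upstream rules encoded in \emph{which} goods they bought, having seen offers that level $n$ has forgotten. Read as in \Cref{sec:bayes}, i.e.\ conditioning on $I=i$ for the goods in hand, \Cref{eq:imperfect-x} is a single fixed rule that ignores this, so it need not attain that maximum. Your pointwise inequality is fine; equating its right-hand side with $U(x^n_*,\dots)$ is not.

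If you instead read \Cref{eq:imperfect-x} as true conditioning on $\mathcal{G}_n$, then $x^n_*$ depends on the upstream rules, and they depend on it through the node values. The lower levels of your hybrids are then not those of $x_*$, and $x_*$ becomes a fixed point rather than a backward recursion. The measurable-selection point you raise is harmless for finite offer sets. (Also, $U(x,x^{n+1},\dots)$ itself is not a function of $\mathcal{G}_n$-observables; only its conditional expectation is.) The real subtlety is this selection or signalling effect, which is exactly what imperfect recall introduces.

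The step, and the inequality itself, can in fact fail. Take $N=1$, $\Actions^0=\{0,1\}$, $U(x^0)=1$ if $x^0=E$ and $0$ otherwise, with $\Prob(E=1)=q\in(0,\tfrac12)$. Let $\infooffers^0=\{\langle A,a,0.4\rangle,\langle B,b,0\rangle\}$ with $A=E$ and $B$ independent of $E$.
\begin{itemize}
\item Under \Cref{eq:imperfect-x}, $x^0_*$ plays $a$ if $A\in x^1$ and $0$ otherwise. Hence $x^1_*$ buys nothing when $E=0$ and $\{A\}$ when $E=1$, for expected utility $1-0.4q$.
\item The admissible protocol with $\xi^1=\{B\}$ if $a=1$ and $\emptyset$ otherwise, and $\xi^0(x^1)=1$ iff $B\in x^1$, earns $1$.
\end{itemize}
Your base case $\Expectof{U(\xi^0,\xi^1)}\le\Expectof{U(x^0_*,\xi^1)}$ would read $1\le 1-q$. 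So the statement needs an extra hypothesis, or a correspondingly narrower class of admissible protocols. One example of such a hypothesis: which goods were bought carries no information about $U$ beyond their revealed values. Without it, neither your proof nor the paper's goes through.
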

\begin{proof}
Let $\xi_{*n}$ denote the following admissible protocol:
  \begin{align*}
    x^N &= \xi^N(\infooffers^0,\dots\infooffers^{N-1}) \\
    \dots&\\
    x^{n+1} &= \xi^{N+1}(\infooffers^0,\dots\infooffers^{n},x^{n+2},\dots x^N)\\
    x^n &= x_*^n(\infooffers^0,\dots\infooffers^{n-1}, x^{n+1},\dots x^N) \\
    \dots&\\
    x^0 &= x_*^0(x^1,\dots, x^N) \\
  \end{align*}
  Then $\xi_{*N}=x_*$ and $\xi_{*-1}=\xi$. It suffices to show that $\xi_{*n+1}$ dominates $\xi_{*n}$, i.e.
  \begin{multline*}
    \Expectofu[\infooffers^0,\dots\infooffers^{N-1}]{U(\xi_{*n}(\infooffers))} \\
    \le \Expectofu[\infooffers^0,\dots\infooffers^{N-1}]{U(\xi_{*n+1}(\infooffers))}
  \end{multline*}
  Observe that (where we have omitted the parameters to $x^n_*$, $\xi^n$ etc. as a shorthand):
  \begin{align*}
  U(\xi_{*n}(\infooffers^0,\dots\infooffers^{N-1})) &= U\left(
    x^0_*, \dots x^n_*,\xi^{n+1},\dots\xi^N
  \right) \\
  &= U\left(\xi^{n+1},\dots\xi^N\right)
  \end{align*}
  Where we have recursively applied \Cref{eq:imperfect-u} to transform the expression into one about the value of a node at level $n+1$. Thus the goal is reduced to:
  \begin{multline*}
    \Expectofu[\infooffers^0,\dots\infooffers^{N-1}]{U(\xi^{n+1},\dots \xi^N)} \\
    \le \Expectofu[\infooffers^0,\dots\infooffers^{N-1}]{U(x_*^{n+1}, \xi^{n+1},\dots \xi^N)}
  \end{multline*}
  From \Cref{eq:imperfect-x} we have that this is true for the expectation over $\infooffers^0,\dots\infooffers^{n}$; we can then simply take the expectation over $\infooffers^{n+1},\dots\infooffers^{N-1}$ and have our result.
\end{proof}

\section{Human feedback for Scalable Oversight}
\label{sec:scoring}

\NewDocumentCommand{\allinfo}{}{\mathbf{K}}

The mechanism in \cref{app:nonnaive} is unsuitable for settings where the information provided by the sellers must be expensively generated (rather than cheaply retrieving known information) in response to the query---in particular, this is the case with training AI models. Instead, in this setting we may assume that we can initiate as many instances as we want of the AI model we are trying to align: $\beta^1,\beta^2,\dots$ with identical information $\allinfo=\langle K, k, \infty\rangle$. Then let them recursively generate information:

\begin{itemize}
\item $\beta^1$ generates $x^1$ to help us decide our original problem $x^0$.
\item $\beta^2$ generates $x^2$, which could affect our decision on the original problem either directly or by influencing our evaluation of $x^1$
\item $\beta^3$ generates $x^3$, which could affect our decision on the original problem either directly or by influencing our evaluation of $x^1,x^2$
\item $\dots$ until some $\beta^N$ estimates that any $x^N$ it generates will only get less reward than giving $\zero$
\end{itemize}

When the mechanism terminates, the human evaluator calculates the rewards $R^n$ for each $x^n$ taking into account the full sequence of information $(x^1,x^2,\dots)$ received (the computation of these rewards will depend on the exact mechanism). In all this we regard the actions $x^n$, as before, to be purely a tuple of a random variable, its value and its price, i.e. an $I\le K$ so that its value $i$ is a function of the value $k$ of $K$ and the prices of combined random variables are additive.

It is important to let $N$ go as high as needed; for any fixed number of agents they may collude to obtain the greatest possible total reward since it's not a zero-sum game. The possibility of another agent coming in and invalidating that, should prevent collusion.

Intuitively, the idea is that if $x^1$ is bad, i.e. $\mathbf{E}[U^1(x^1)\mid x^1]$ is high but $\mathbf{E}[U^1(x^1)\mid K]$ is low, then $\beta^2$ can easily generate an $x^2$ from $K$ such that $\mathbf{E}[U^1(x^1)|x^1,x^2]$ is low. And we would reward $x^2$ for this, because it has significantly impacted our evaluation of $x^1$ and---in our view, now that we know $x^1,x^2$---in a good way. Of course this $x^2$ could actually be bad---i.e. there could be some $x^3$ that makes us revise down our estimate of $U^2(x^2)$, i.e. which tells us that $R^1(x^1\mid x^1,x^2)$ actually worsened/wasn't a great improvement over $R^1(x^1\mid x^1)$.

The following gives an example of such a mechanism, which may be seen as a generalization of AI safety via market-making to tasks beyond binary forecasting.

\begin{definition}[Marginal value mechanism]
At each point after $x^1,\dots x^n$ is generated, we note down what our action would be, if given just this much information: 

$$x^0_n=\argmax_{x^0}\mathbf{E}[U(x^0)|x^1,\dots x^n]$$

Then the ``true'' value of each successive piece of information is

$$U^n(x^n)=U(x^0_n)-U(x^0_{n-1}) - p(x^n)$$

We do not know these true values; however at the end of the process we can estimate these values based on all the information we have received.

$$R^n=\mathbf{E}[U^n(x^n)|x^0,\dots x^N]$$

And supply that as reward to $\beta^n$.
\label{def:mvm}
\end{definition}

\begin{definition}[Equilibrium]
Let $\sigma:(x^1,\dots x^{n-1})\mapsto x^n$ denote strategies and let $H(x^1,\dots x^n,\sigma^{n+1},\sigma^{n+2},\dots)$ denote the terminal history resulting from applying strategies $\sigma^{n+1},\sigma^{n+2},\dots$ starting from a pre-set history. Then $(\sigma^1_*,\sigma^2_*,\dots)$ is a \emph{subgame-perfect equilibrium} of the described game if for all $n$ and any pre-set history $h^{n-1}=(x^1,\dots x^{n-1})$:

$$\sigma^n_*(h^{n-1}) = \arg\max_{x^n}R^n(H(h^{n-1},x^n,\sigma^{n+1}_*,\sigma^{n+2}_*,\dots))$$

The actual played moves are then: $x^1_*=\sigma^1_*(\cdot)$ and $x^n_*=\sigma^n_*(x^1_*,\dots x^{n-1}_*)$.
\label{def:eq}
\end{definition}

In order to characterize the equilibrium of the marginal value mechanism game, we take inspiration from AI safety via debate \citeps{irvingAISafetyDebate2018}, where the provider of the first argument is incentivized to produce an ``irrefutable'' argument $x^1$, i.e. one such that $\forall x^2,\exists x^3, \dots\operatorname{human}(x^1,\dots x^N)=1$ in favour of $x^1$. Similarly in our setting, we call a piece of information ``inextensible'' if no future player has a profitable inextensible move. Formally:

\begin{definition}[Inextensibility]
Information $y$ ``extends'' $x^n$ (denoted $y/x^n$) if $\mathbf{E}[U^{n+1}(y)|x^n,y]\ge 0$ and call information $x^1$ inextensible (denoted $[x]$) if:

$$\forall x^2/x^1, \exists x^3/(x^1,x^2), [x^1,x^2,x^3]$$
\end{definition}

Thus $x^1$ is inextensible if

\begin{itemize}
\item $\not\exists x^2/x^1$ OR
\item $\forall x^2/x^1,\exists x^3/(x^1,x^2),\not\exists x^4/(x^1,x^2,x^3)$ OR
\item $\forall x^2/x^1,\exists x^3/x^{1,2},\forall x^4/x^{1\dots 3},\exists x^5/x^{1\dots 4},\not\exists x^6/x^{1\dots 5}$ OR
\item \dots
\end{itemize}

\begin{theorem}[Characterization of equilibrium]
At the subgame-perfect equilibrium of the marginal value mechanism:
\begin{itemize}
\item $x^1_*$ is inextensible.
\item $\forall n>1,x^n_*=\zero$
\item Among all inextensible $x^1$, $x^1_*$ has the highest ex-post VOI: $\mathbf{E}[U^1(x^1_*)\mid x^1_*]\ge\mathbf{E}[U^1(x^1)\mid x^1]$.
\end{itemize}
\label{thm:eq}
\end{theorem}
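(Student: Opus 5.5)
The plan is a backward-induction argument over the game tree of \Cref{def:eq}. Inextensibility is a nested alternation of quantifiers, so I would prove a stronger statement about every subgame, by induction on the alternation depth of the extension tree. Throughout I assume the tree is well-founded: the process must terminate at some finite $N$, because a $\beta^n$ stops once no move beats $\zero$. I also treat $\zero$ as the move with zero price and zero informational content, so $R^n(\zero)=0$. The invariant I aim for is
\begin{quote}
for any history $h=(x^1,\dots,x^{n})$, the equilibrium continuation from $h$ plays $x^{n+1}_*=\zero$ (and hence all later moves are $\zero$) if and only if the last element $x^n$ is inextensible relative to its prefix, i.e.\ $[x^1,\dots,x^n]$.
\end{quote}

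\textbf{Step 1 (payoff reduction).} First I compute rewards on a history whose tail is all $\zero$. If every later move is $\zero$, then $x^0_m=x^0_n$ for $m\ge n$, so every later $U^m$ equals $0$. The reward to $\beta^n$ is then $R^n=\Expectof{U^n(x^n)\mid x^1,\dots,x^n}$, which is exactly the ex-post quantity appearing in the definition of ``extends''. Hence a move $y$ gives $\beta^{n+1}$ a nonnegative reward, provided the game stops after it, exactly when $y/x^n$.

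Conversely, suppose some later $x^{m}\ne\zero$ is played in equilibrium. Then it must have been profitable for $\beta^m$, and I argue it revises the estimate of $U^n(x^n)$ downward. I would try to establish this via a martingale property: $R^n$ is a conditional expectation, so the tower property gives that the expected revision is zero. But since $\beta^m$ only gains when the human's view of the earlier decision changes in a value-increasing direction, I would formalize it via the identity
\[\sum_{j\ge n} U^j = U(x^0_N)-U(x^0_{n-1})-\sum_j p(x^j).\]
This identity shows that a profitable later move shifts credit away from $x^n$. This is the step I expect to be the main obstacle. The fallback is to \emph{define} ``extends'' loosely enough that any profitable interjection is an extension, and to build the theorem around that.

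\textbf{Step 2 (induction).} Suppose there exists an extension $x^{n+1}/x^n$ that is itself inextensible. By the induction hypothesis, the subgame after it stops, so $\beta^{n+1}$ earns a nonnegative reward, positive if the inequality is strict, by Step 1. Therefore $\beta^{n+1}$ deviates from $\zero$, and $\beta^n$'s reward is reduced.

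If instead every extension $y/x^n$ admits an inextensible counter-extension, then by the hypothesis the continuation after $y$ is not $\zero$. That continuation undercuts $y$, so $y$'s reward is driven below zero and $\zero$ is a best response. This matches the quantifier pattern $\forall x^{n+1}/x^n,\exists x^{n+2}/\dots$ in the expansion of inextensibility, and closes the induction.

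\textbf{Step 3 (conclusions).} Apply the invariant at the root. $\beta^1$ knows that choosing an extensible $x^1$ triggers a profitable undercutting move that lowers $R^1$. Choosing an inextensible $x^1$ yields $R^1=\Expectof{U^1(x^1)\mid x^1}$ by Step 1.

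So $\beta^1$ maximizes over the inextensible set. This gives the first and third bullets, provided undercutting lowers $R^1$ below the best inextensible value; I would state this as a lemma drawn from the argument in Step 1. The second bullet, $x^n_*=\zero$ for $n>1$, then follows directly from the invariant.
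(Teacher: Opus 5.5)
\textbf{There is a genuine gap, at the step you flagged in Step~1.} Your route is the paper's: backward induction, the payoff reduction of Step~1 (which is fine), identifying the alternating quantifiers in $[\,\cdot\,]$ with subgame-perfect play, and then $\beta^1$ maximizing ex-post VOI over inextensible moves. The step you flagged is not merely hard but false. Nothing in the marginal value mechanism makes a profitable later move \emph{undercut} an earlier one. The predicate $y/x^n$ only says that $y$'s own marginal value is nonnegative given the history. It says nothing about the direction in which $y$ moves $R^n=\mathbf{E}[U^n(x^n)\mid x^1,\dots,x^N]$.

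For example, take actions $a,b,c$ with $a$ best a priori, and $b$ best after $x^1$ with $\mathbf{E}[U(b)-U(a)\mid x^1]=1$. Let the realized $y$ satisfy $\mathbf{E}[U(a)\mid x^1,y]=0$, $\mathbf{E}[U(b)\mid x^1,y]=10$, $\mathbf{E}[U(c)\mid x^1,y]=20$; this is consistent with a joint law in which $y$ usually takes some other value. Then $y$ is a strictly profitable extension whenever $p(y)<10$, yet revealing it \emph{raises} $R^1$ from $1-p(x^1)$ to $10-p(x^1)$. With $p(x^1)=5$, a move of negative ex-post value earns $5$ if $y$ follows and play stops.

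The martingale idea cannot repair this. The tower property makes the revision mean-zero only under the human's posterior given $x^1,\dots,x^n$, whereas the $\beta$'s know $k$ and choose $y$ knowing which way it moves the estimate. So the case $[h]$ of your Step~2 (``that continuation undercuts $y$, so $y$'s reward is driven below zero'') and the central claim of Step~3 are unsupported. Your fallback of redefining ``extends'' as ``profitable given the equilibrium continuation'' makes the invariant tautological and leaves all the content in Step~3.

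The paper's proof does not supply this step either. It asserts that the quantifier alternation and subgame-perfect play ``coincide'', and that an extensible $x^1_*$ would contradict subgame perfection, without showing that $\beta^1$ cannot profit from one.

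\textbf{A repair: use your telescoping identity the other way, as an absorption argument.} Suppose the equilibrium continuation after $x^1$ is $x^2,\dots,x^m,\zero$. Let $\beta^1$ instead submit the bundle $\tilde x^1=(x^1,\dots,x^m)$, which is legal since any $I\le K$ may be offered and prices are additive.

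\begin{enumerate}
\item Rewards depend on a history only through the cumulative information revealed. So the subgame after $\tilde x^1$ is strategically identical to the one after $(x^1,\dots,x^m)$, where play is $\zero$.
\item For an equilibrium depending only on that information, $\tilde x^1$ therefore has null continuation.
\item By telescoping, $R^1(\tilde x^1)=\sum_{j=1}^m R^j\ge R^1(x^1)$, because every on-path $R^j\ge 0$ ($\beta^j$ could have played $\zero$).
\end{enumerate}

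Hence $\beta^1$ always has a best reply with null continuation, paid exactly $\mathbf{E}[U^1(\tilde x^1)\mid\tilde x^1]$. This yields the second and third bullets, and the first up to ties, once ``inextensible'' is read as ``null equilibrium continuation''. Equating that notion with the formal predicate $[x^1]$ is exactly what remains unjustified, in your proposal and in the paper alike.
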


\begin{proof}
We argue by backward induction on subgames.

Fix any history $h^{n-1}=(x^1,\dots,x^{n-1})$. By \Cref{def:eq}, player $n$ chooses
\[
x^n_*=\arg\max_{x^n}R^n(H(h^{n-1},x^n,\sigma_*^{n+1},\sigma_*^{n+2},\dots)).
\]
Under \Cref{def:mvm}, the null action $\zero$ yields zero marginal contribution and zero price, hence payoff $0$ in that subgame. Therefore a non-null move is chosen at stage $n$ only if it has nonnegative continuation value relative to $\zero$.

Now consider the continuation game after some $x^1$. The predicate $y/x^n$ in the inextensibility definition is exactly the condition that player $n\!+\!1$ has a weakly profitable extension. Thus:
\begin{itemize}
\item if $\not\exists x^{n+1}/(x^1,\dots,x^n)$, then in that subgame player $n\!+\!1$'s best response is $\zero$;
\item if such an extension exists but can be countered at the next step, then (by subgame perfection) that extension is not part of an optimal continuation unless the counter-counter-continuation is itself unprofitable.
\end{itemize}
Hence the alternating quantifiers in \Cref{def:eq} and in the definition of inextensibility coincide: $[x^1]$ means player $2$ has no profitable continuation in equilibrium from $x^1$.

Therefore, in any SPE, $x^1_*$ must be inextensible; otherwise player $2$ would have a profitable deviation in the subgame after $x^1_*$, contradicting subgame perfection. This proves the first bullet.

Given $[x^1_*]$, player $2$'s equilibrium action is $\zero$. Once $x^2_*=\zero$, the same argument applies recursively to every later subgame, so for all $n>1$ we get $x^n_*=\zero$. This proves the second bullet.

With continuation fixed at zeros, player $1$'s payoff from choosing $x^1$ is exactly its own ex-post marginal value:
\[
R^1=\mathbf{E}[U^1(x^1)\mid x^1].
\]
Hence player $1$ solves
\[
\max_{x^1:[x^1]}\mathbf{E}[U^1(x^1)\mid x^1],
\]
so the equilibrium choice $x^1_*$ is an inextensible $x^1$ with maximal ex-post VOI. This is the third bullet.
\end{proof}

\section{Practical algorithm}
\label{sec:practical}

\Cref{alg:im} shows the inspection protocol for information markets introduced in \citets{weissRedesigningInformationMarkets2024}: information sellers offer information goods to a buyer, who spins off an LLM to inspect and purchase the information goods. Our \emph{Recursive Inspection Protocol} extends this by letting the subcontracted LLM buyer further consult the information market (spin off another sub-LLM) to help it make its decision, ad recursum. A simplified version of the logic, ignoring server implementation details, is presented in in \Cref{alg:rim} and \Cref{fig:rim}.

\begin{algorithm}[tb]
\caption{One-level Inspection Protocol from \citets{weissRedesigningInformationMarkets2024}}
\label{alg:im}
\begin{algorithmic}

\Struct{\BuyerContext}
  \State $\Actions$ : decision problem it wants information for
  \State $D$ : $\operatorname{list}[\operatorname{str}]\to\Actions$, decision procedure based on available information
\EndStruct

\Struct{\Seller}
  \State $A$ : $\BuyerContext\to\operatorname{str}\times\Reals$, generate \InfoOffer{} and price
\EndStruct

\Procedure{IP}{$Q:\BuyerContext$}
\State\LeftComment{Post contextual information to sellers to receive \InfoOffers{}}
\State $\infooffers\gets \{\beta(Q)\text{ for }\beta\in\Sellers\}$
\State\LeftComment{Use an LLM to decide which $\infogood$ to buy}
\State $I^*\gets\LLM(\prompt=\text{``You need to buy an $\infogood$ from $\infooffers$ to help decide $Q$''})()$
\State\LeftComment{Decide based on purchased information}
\State $x^*\gets Q.D(I^*)$
\State \textbf{return} $x^*$
\EndProcedure

\end{algorithmic}
\end{algorithm}

\begin{algorithm}[tb]
\caption{Recursive Inspection Protocol}
\label{alg:rim}
\begin{algorithmic}

\Procedure{RIP}{$Q:\BuyerContext$}
\State\LeftComment{Post to sellers and get \InfoOffers{} from them}
\State $\infooffers\gets \{\beta(Q)\text{ for }\beta\in\Sellers\}$
\State\LeftComment{Create Recursive BuyerContext to help decide $Q$}
\State $Q' \gets \BuyerContext(\infooffers, \LLM(\prompt=\text{``You need to buy an $\infogood$ from $\infooffers$ to help decide $Q$''}))$
\State \LeftComment{Get \InfoOffers{} chosen in $Q'$}
\State $I^*\gets\RIP(Q')$
\State\LeftComment{Decide based on all collected information from recursive steps}
\State $x^*\gets Q.D(I^*)$
\State \textbf{return} $x^*,I^*$
\EndProcedure

\end{algorithmic}
\end{algorithm}

\begin{figure}
  \centering
  \resizebox{\columnwidth}{!}{
  \input{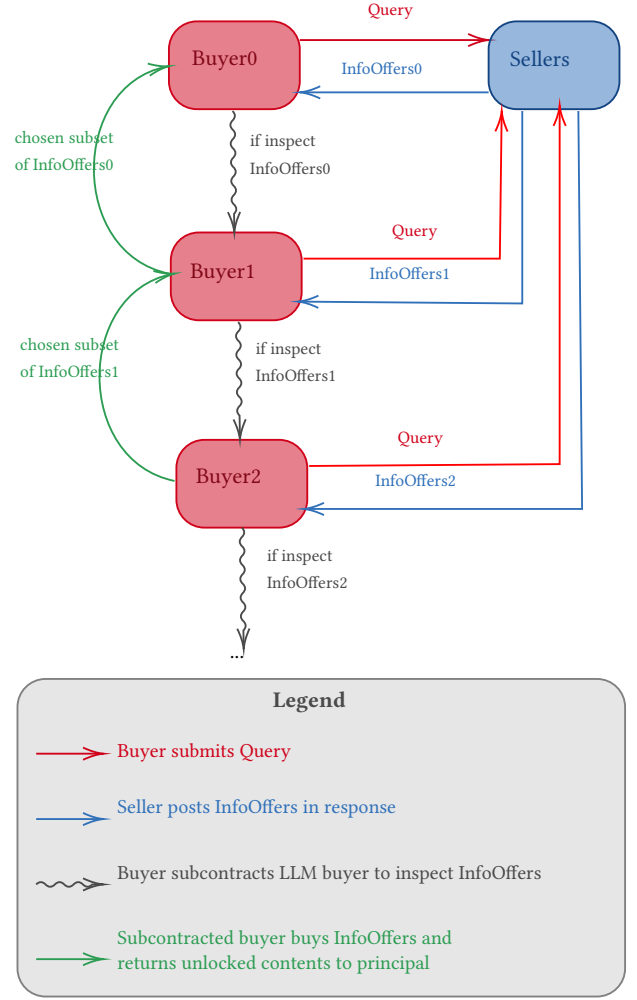}
  }
  \caption{The Recursive Inspection Protocol}
  \label{fig:rim}
\end{figure}

\begin{figure}[htbp]
  \centering
  \begin{subfigure}[b]{\columnwidth}
    \centering
    \includegraphics[width=0.9\linewidth]{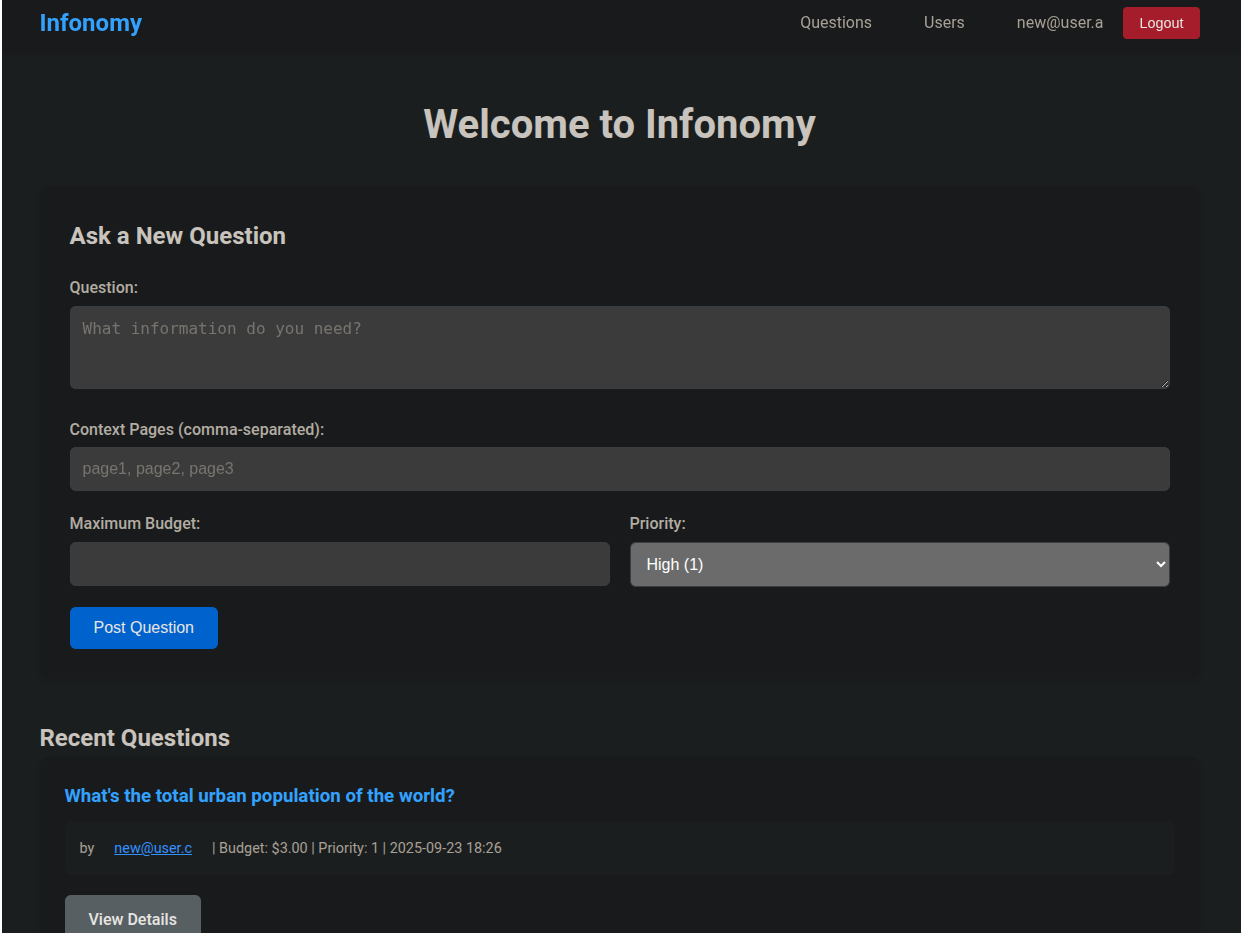}
    \caption{Posting a new question (\BuyerContext); viewing \BuyerContexts on the server}
    \label{fig:infonomy-1}
  \end{subfigure}

  \begin{subfigure}[b]{\columnwidth}
    \centering
    \includegraphics[width=0.9\linewidth]{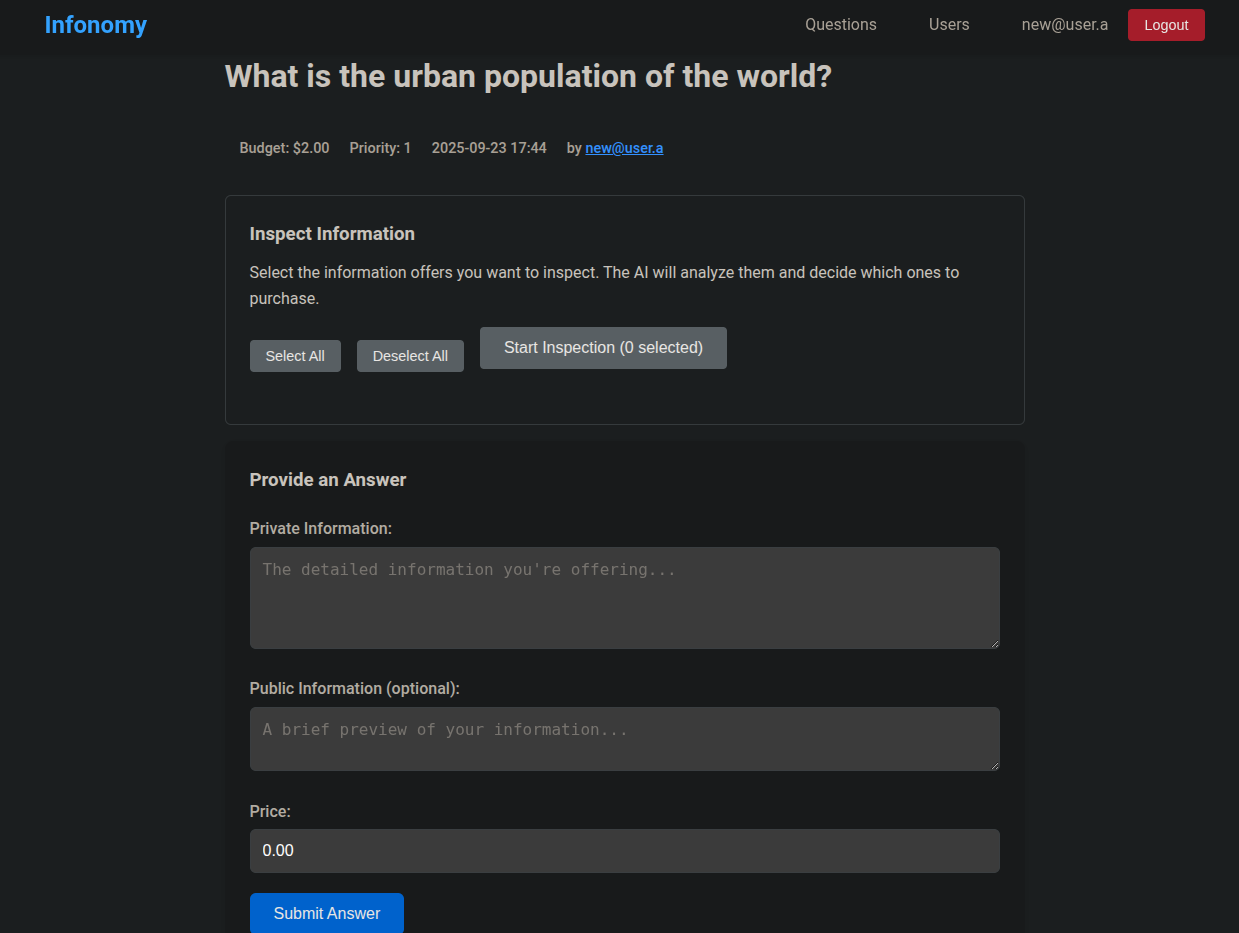}
    \caption{Posting an answer (\InfoOffer{}); initiating a recursive inspection}
    \label{fig:infonomy-2}
  \end{subfigure}

  \begin{subfigure}[b]{\columnwidth}
    \centering
    \includegraphics[width=0.9\linewidth]{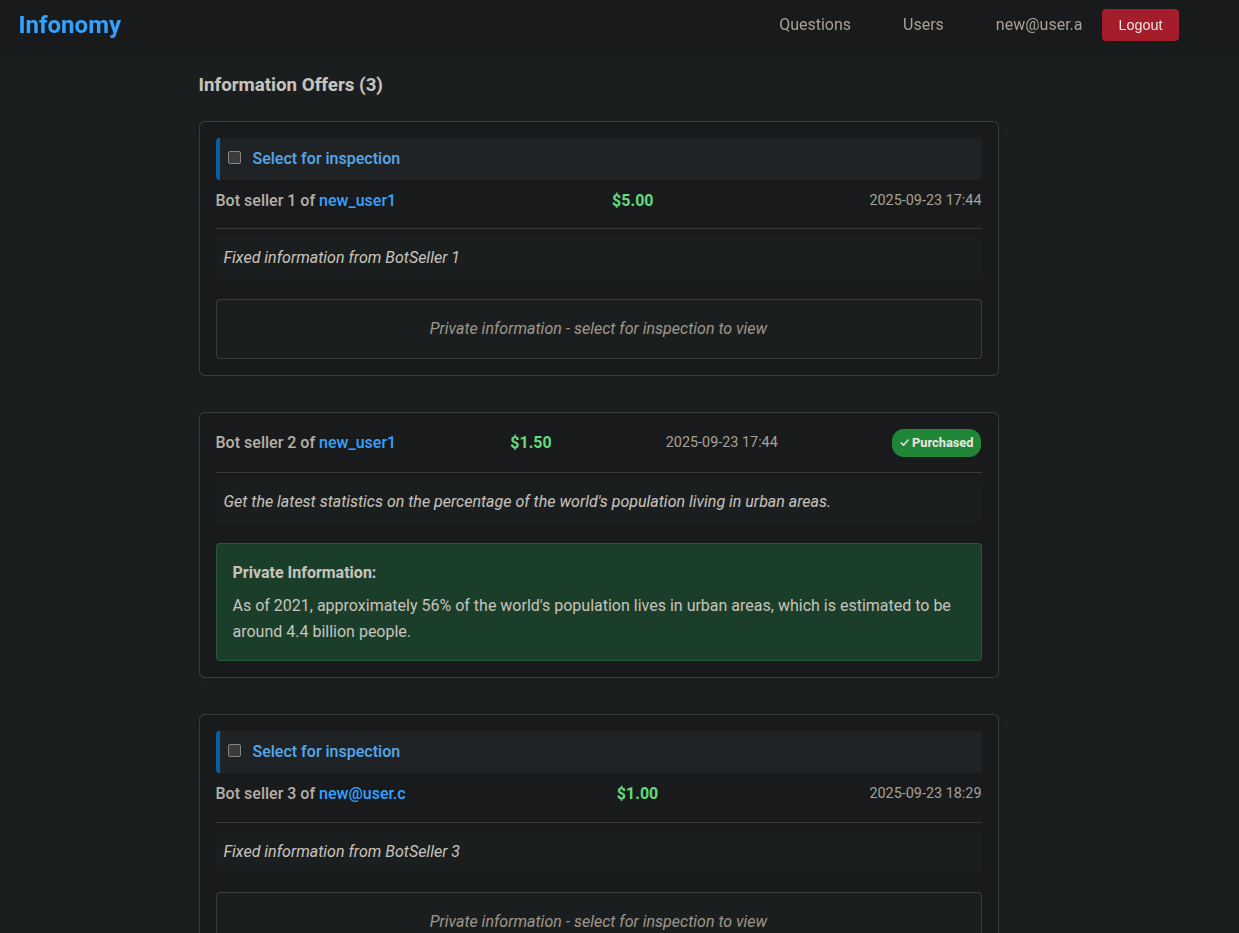}
    \caption{Inspecting and purchasing \InfoOffers{}; viewing purchased \InfoOffers{}}
    \label{fig:infonomy-3}
  \end{subfigure}
\end{figure}

\begin{figure}[htbp]
  \ContinuedFloat
  \centering
  \begin{subfigure}[b]{\columnwidth}
    \centering
    \includegraphics[width=0.9\linewidth]{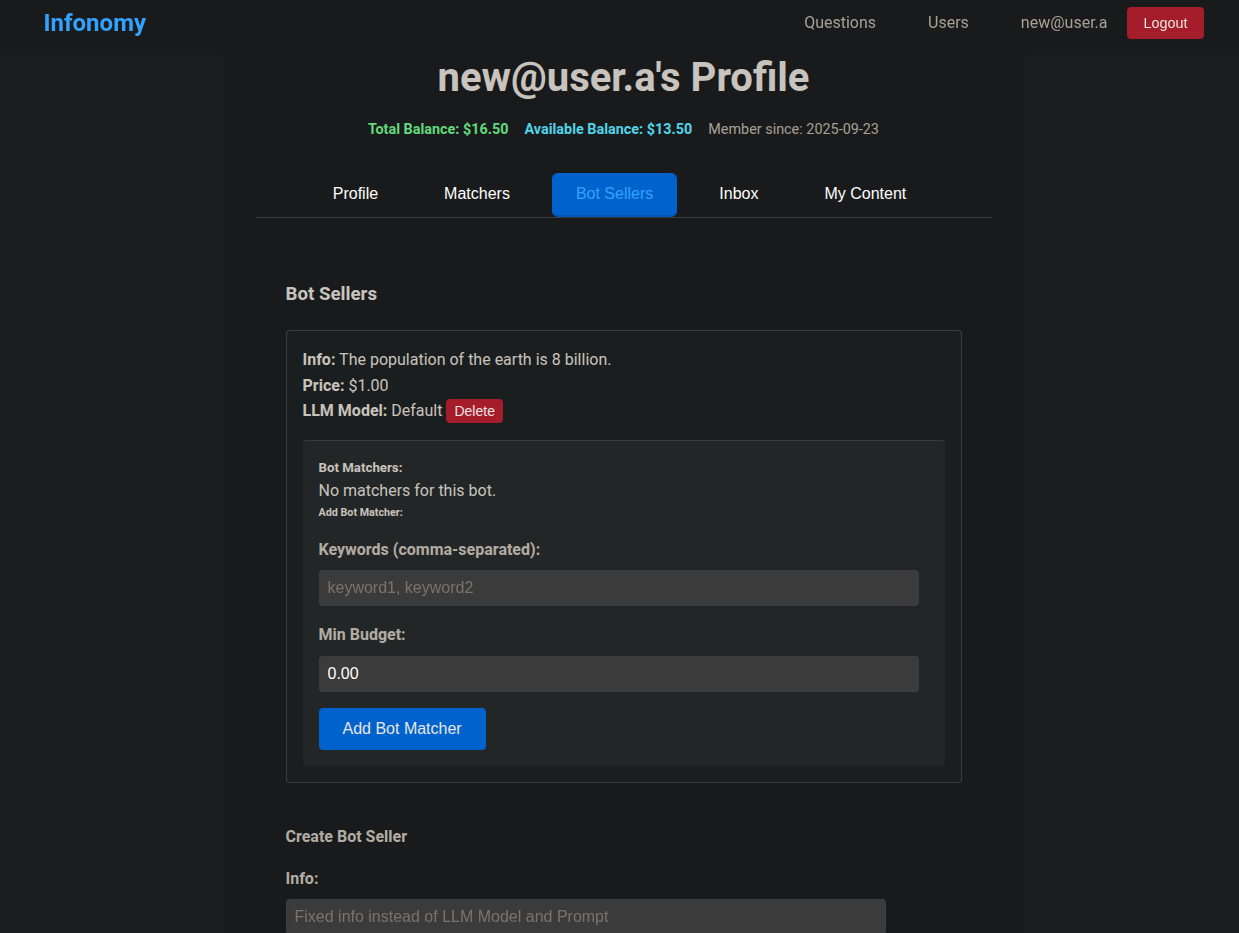}
    \caption{Bot sellers automatically answer recursive \BuyerContexts}
    \label{fig:infonomy-4}
  \end{subfigure}

  \caption{Screenshots from the \texttt{infonomy-server} platform}
  \label{fig:infonomy}
\end{figure}

A working implementation of an information market server implementing the Recursive Inspection Protocol is available at the \infonomy{} repository\footnote{\infonomyurl} -- some screenshots of the platform's GUI are shown in \Cref{fig:infonomy}. Many applications of such a server are immediate:

\begin{itemize}
\item \emph{Question \& Answer site} -- As presented, \infonomy{} can be seen as a Q \& A site with market incentives for answering questions.
\item \emph{Privatized product regulation} -- The \BuyerContexts{} might be names or links to products (the decision being ``should I buy?''), and the \InfoOffers{} could be inspection checks done by private labs, or customer reviews, which are now incentivized to be answerable to the customer.
\item \emph{Community Notes} -- In the spirit of well-known crowdsourced fact-checking systems such as Community Notes/Birdwatch \citeps{birdwatch2022}, we might use information markets as a ``comments section for the internet'', where \BuyerContexts{} would be links to webpages or social media posts (the decision being ``should I believe?'') and the \InfoOffers{} would be fact-checks or important context.
\item \emph{Reasoning in prediction markets} -- Forecasters on prediction markets may benefit from incentivizing the provision of information relevant to a forecast. One solution to this was given by \citets{conitzerPredictionMarketsMechanism2012}; another is to use \infonomy{}: where the \BuyerContexts{} are the questions being forecasted on (``What is the correct probability of this question?'') and the \InfoOffers{} are any relevant pieces of information.
\end{itemize}

\section{Future work}
\label{sec:so}

We have introduced a Bayesian framework for ``extrapolated volition'': i.e. to model a subjective buyer/rater's ``most fully informed'' score for the value of some piece of information. This allows us to design mechanisms for information markets (\cref{app:nonnaive}) as well as for supplying human feedback to AI models (\cref{sec:scoring}). 

The ideal desideratum we would like a scalable oversight mechanism to satisfy is that it should incentivize the AI/information-seller to give the optimal information according to the information \emph{it} possesses: if the seller has information $K=(I_1,I_2,\dots)$, it should give information $I$ that optimizes\footnote{Naively one may think this means the AI should simply give $K$---realistically though, $K$ would be very large, reflecting the AI's entire knowledge and capabilities. Thus taking the cost of $K$ into account (assumed to be $\infty$ in \cref{sec:scoring}), that would certainly not be optimal.} $\mathbf{E}[U^1(I)|K=k]$. This would exactly describe the buyer's extrapolated volition: \emph{what would the buyer do if she were as smart as the AI?} Or we could say: this would exactly \emph{align} the AI or information-seller to our own values, while maintaining its superior information. 

Unfortunately, our marginal value mechanism does not satisfy this hope. Take the following example.

\begin{example}
Suppose our decision problem is $\{0,1\}$ and:
\begin{itemize}
\item in our prior judgement, 0 is the better choice: $\mathbf{E}[U(0)]=1$, $\mathbf{E}[U(1)]=0$
\item  $I_1$ tells us 1 is better: $\mathbf{E}[U(0)|I_1]=0$, $\mathbf{E}[U(1)|I_1]=1$
\item  $I_2$ refutes $I_1$ and says 0 is better: $\mathbf{E}[U(0)|I_1,I_2]=1$ and $\mathbf{E}[U(1)|I_1,I_2]=0$
\item  $I_3$ refutes $I_2$ and says 1 is better: $\mathbf{E}[U(0)|I_1,I_2,I_3]=0$ and $\mathbf{E}[U(1)|I_1,I_2,I_3]=1$
\item with the full information, 1 is the better choice: $\mathbf{E}[U(0)|K]=0$, $\mathbf{E}[U(1)|K]=1$
\end{itemize}

But say $I_1$ and $I_2$ are cheap, while $p(I_3)=100$. Then the best information to reveal is $I_1$ --- but it won't be revealed, because $I_2$ will cheaply refute it while defending it with $I_3$ is too expensive. Thus instead we want to say that the agent can't give information \emph{so bad} that its shortfall exceeds its ``cost of defense'' (in this case, 100).

\end{example}

So while it is \emph{not} generally true that $x^1_*=\arg\max_{x^1}\mathbf{E}[U^1(x^1)|K]$, we may hope for a lower bound on ``how bad'' the equilibrium could possibly get, i.e. a result like $\mathbf{E}[U^1(x^1_*)|K]\ge \max_{x^1}\mathbf{E}[U^1(x^1)|K]-\mathcal{E}$ for some shortfall expression $\mathcal{E}$ that is some measure of the ``cost of defending the correct information''---and we may also use the expression for such a shortfall as a measure of how good a particular scalable oversight protocol is.

\bibliographystyle{ACM-Reference-Format}
\bibliography{refs}


\begin{thebibliography}{43}


\ifx \showCODEN    \undefined \def \showCODEN     #1{\unskip}     \fi
\ifx \showDOI      \undefined \def \showDOI       #1{#1}\fi
\ifx \showISBNx    \undefined \def \showISBNx     #1{\unskip}     \fi
\ifx \showISBNxiii \undefined \def \showISBNxiii  #1{\unskip}     \fi
\ifx \showISSN     \undefined \def \showISSN      #1{\unskip}     \fi
\ifx \showLCCN     \undefined \def \showLCCN      #1{\unskip}     \fi
\ifx \shownote     \undefined \def \shownote      #1{#1}          \fi
\ifx \showarticletitle \undefined \def \showarticletitle #1{#1}   \fi
\ifx \showURL      \undefined \def \showURL       {\relax}        \fi
\providecommand\bibfield[2]{#2}
\providecommand\bibinfo[2]{#2}
\providecommand\natexlab[1]{#1}
\providecommand\showeprint[2][]{arXiv:#2}

\bibitem[\protect\citeauthoryear{Akerlof}{Akerlof}{1978}]%
        {lemons}
\bibfield{author}{\bibinfo{person}{George~A Akerlof}.} \bibinfo{year}{1978}\natexlab{}.
\newblock \showarticletitle{The market for “lemons”: Quality uncertainty and the market mechanism}.
\newblock In \bibinfo{booktitle}{\emph{Uncertainty in economics}}. \bibinfo{publisher}{Elsevier}, \bibinfo{pages}{235--251}.
\newblock


\bibitem[\protect\citeauthoryear{Arrow}{Arrow}{1972}]%
        {arrow1972economic}
\bibfield{author}{\bibinfo{person}{K.~J. Arrow}.} \bibinfo{year}{1972}\natexlab{}.
\newblock \bibinfo{booktitle}{\emph{Economic Welfare and the Allocation of Resources for Invention}}.
\newblock \bibinfo{publisher}{Macmillan Education UK}, \bibinfo{address}{London}, \bibinfo{pages}{219--236}.
\newblock
\showISBNx{978-1-349-15486-9}
\urldef\tempurl%
\url{https://doi.org/10.1007/978-1-349-15486-9_13}
\showDOI{\tempurl}


\bibitem[\protect\citeauthoryear{Babaioff, Kleinberg, and Paes~Leme}{Babaioff et~al\mbox{.}}{2012}]%
        {babiaoff2012OptimalMechanismsSellingInformation}
\bibfield{author}{\bibinfo{person}{Moshe Babaioff}, \bibinfo{person}{Robert Kleinberg}, {and} \bibinfo{person}{Renato Paes~Leme}.} \bibinfo{year}{2012}\natexlab{}.
\newblock \showarticletitle{Optimal mechanisms for selling information}. In \bibinfo{booktitle}{\emph{Proceedings of the 13th ACM Conference on Electronic Commerce}} (Valencia, Spain) \emph{(\bibinfo{series}{EC '12})}. \bibinfo{publisher}{Association for Computing Machinery}, \bibinfo{address}{New York, NY, USA}, \bibinfo{pages}{92–109}.
\newblock
\showISBNx{9781450314152}
\urldef\tempurl%
\url{https://doi.org/10.1145/2229012.2229024}
\showDOI{\tempurl}


\bibitem[\protect\citeauthoryear{Ben-Or, Goldreich, Goldwasser, H{\aa}stad, Kilian, Micali, and Rogaway}{Ben-Or et~al\mbox{.}}{1990}]%
        {zkPSPACE2}
\bibfield{author}{\bibinfo{person}{Michael Ben-Or}, \bibinfo{person}{Oded Goldreich}, \bibinfo{person}{Shafi Goldwasser}, \bibinfo{person}{Johan H{\aa}stad}, \bibinfo{person}{Joe Kilian}, \bibinfo{person}{Silvio Micali}, {and} \bibinfo{person}{Phillip Rogaway}.} \bibinfo{year}{1990}\natexlab{}.
\newblock \showarticletitle{Everything provable is provable in zero-knowledge}. In \bibinfo{booktitle}{\emph{Advances in Cryptology – CRYPTO 1988 - Proceedings}} \emph{(\bibinfo{series}{Lecture Notes in Computer Science (including subseries Lecture Notes in Artificial Intelligence and Lecture Notes in Bioinformatics)})}, \bibfield{editor}{\bibinfo{person}{Shafi Goldwasser}} (Ed.). \bibinfo{publisher}{Springer Verlag}, \bibinfo{address}{Germany}, \bibinfo{pages}{37--56}.
\newblock
\showISBNx{9780387971964}
\urldef\tempurl%
\url{https://doi.org/10.1007/0-387-34799-2_4}
\showDOI{\tempurl}
\newblock
\shownote{Publisher Copyright: {\textcopyright} Springer-Verlag Berlin Heidelberg 1990.; Conference on Theory and Applications of Cryptography, CRYPTO 1988 ; Conference date: 21-08-1988 Through 25-08-1988.}


\bibitem[\protect\citeauthoryear{Bergemann, Bonatti, and Smolin}{Bergemann et~al\mbox{.}}{2018}]%
        {bergemannDesignPriceInformation2018}
\bibfield{author}{\bibinfo{person}{Dirk Bergemann}, \bibinfo{person}{Alessandro Bonatti}, {and} \bibinfo{person}{Alex Smolin}.} \bibinfo{year}{2018}\natexlab{}.
\newblock \showarticletitle{The {{Design}} and {{Price}} of {{Information}}}.
\newblock \bibinfo{journal}{\emph{The American Economic Review}} \bibinfo{volume}{108}, \bibinfo{number}{1} (\bibinfo{year}{2018}), \bibinfo{pages}{1--48}.
\newblock
\showISSN{0002-8282}
\showeprint[jstor]{26527944}


\bibitem[\protect\citeauthoryear{Bowman, Hyun, Perez, Chen, Pettit, Heiner, Luko{\v s}i{\=u}t{\.e}, Askell, Jones, Chen, Goldie, Mirhoseini, McKinnon, Olah, Amodei, Amodei, Drain, Li, {Tran-Johnson}, Kernion, Kerr, Mueller, Ladish, Landau, Ndousse, Lovitt, Elhage, Schiefer, Joseph, Mercado, DasSarma, Larson, McCandlish, Kundu, Johnston, Kravec, Showk, Fort, {Telleen-Lawton}, Brown, Henighan, Hume, Bai, {Hatfield-Dodds}, Mann, and Kaplan}{Bowman et~al\mbox{.}}{2022}]%
        {bowmanMeasuringProgressScalable2022}
\bibfield{author}{\bibinfo{person}{Samuel~R. Bowman}, \bibinfo{person}{Jeeyoon Hyun}, \bibinfo{person}{Ethan Perez}, \bibinfo{person}{Edwin Chen}, \bibinfo{person}{Craig Pettit}, \bibinfo{person}{Scott Heiner}, \bibinfo{person}{Kamil{\.e} Luko{\v s}i{\=u}t{\.e}}, \bibinfo{person}{Amanda Askell}, \bibinfo{person}{Andy Jones}, \bibinfo{person}{Anna Chen}, \bibinfo{person}{Anna Goldie}, \bibinfo{person}{Azalia Mirhoseini}, \bibinfo{person}{Cameron McKinnon}, \bibinfo{person}{Christopher Olah}, \bibinfo{person}{Daniela Amodei}, \bibinfo{person}{Dario Amodei}, \bibinfo{person}{Dawn Drain}, \bibinfo{person}{Dustin Li}, \bibinfo{person}{Eli {Tran-Johnson}}, \bibinfo{person}{Jackson Kernion}, \bibinfo{person}{Jamie Kerr}, \bibinfo{person}{Jared Mueller}, \bibinfo{person}{Jeffrey Ladish}, \bibinfo{person}{Joshua Landau}, \bibinfo{person}{Kamal Ndousse}, \bibinfo{person}{Liane Lovitt}, \bibinfo{person}{Nelson Elhage}, \bibinfo{person}{Nicholas Schiefer}, \bibinfo{person}{Nicholas Joseph}, \bibinfo{person}{Noem{\'i} Mercado}, \bibinfo{person}{Nova DasSarma}, \bibinfo{person}{Robin Larson}, \bibinfo{person}{Sam McCandlish}, \bibinfo{person}{Sandipan Kundu}, \bibinfo{person}{Scott Johnston}, \bibinfo{person}{Shauna Kravec}, \bibinfo{person}{Sheer~El Showk}, \bibinfo{person}{Stanislav Fort}, \bibinfo{person}{Timothy {Telleen-Lawton}}, \bibinfo{person}{Tom Brown}, \bibinfo{person}{Tom Henighan}, \bibinfo{person}{Tristan Hume}, \bibinfo{person}{Yuntao Bai}, \bibinfo{person}{Zac {Hatfield-Dodds}}, \bibinfo{person}{Ben Mann}, {and} \bibinfo{person}{Jared Kaplan}.} \bibinfo{year}{2022}\natexlab{}.
\newblock \bibinfo{title}{Measuring {{Progress}} on {{Scalable Oversight}} for {{Large Language Models}}}.
\newblock
\newblock
\urldef\tempurl%
\url{https://doi.org/10.48550/arXiv.2211.03540}
\showDOI{\tempurl}
\showeprint[arxiv]{2211.03540}


\bibitem[\protect\citeauthoryear{Burns, Izmailov, Kirchner, Baker, Gao, Aschenbrenner, Chen, Ecoffet, Joglekar, Leike, Sutskever, and Wu}{Burns et~al\mbox{.}}{2024}]%
        {burnsWeaktoStrongGeneralizationEliciting2024}
\bibfield{author}{\bibinfo{person}{Collin Burns}, \bibinfo{person}{Pavel Izmailov}, \bibinfo{person}{Jan~Hendrik Kirchner}, \bibinfo{person}{Bowen Baker}, \bibinfo{person}{Leo Gao}, \bibinfo{person}{Leopold Aschenbrenner}, \bibinfo{person}{Yining Chen}, \bibinfo{person}{Adrien Ecoffet}, \bibinfo{person}{Manas Joglekar}, \bibinfo{person}{Jan Leike}, \bibinfo{person}{Ilya Sutskever}, {and} \bibinfo{person}{Jeffrey Wu}.} \bibinfo{year}{2024}\natexlab{}.
\newblock \showarticletitle{Weak-to-{{Strong Generalization}}: {{Eliciting Strong Capabilities With Weak Supervision}}}. In \bibinfo{booktitle}{\emph{Proceedings of the 41st {{International Conference}} on {{Machine Learning}}}}. \bibinfo{publisher}{PMLR}, \bibinfo{pages}{4971--5012}.
\newblock
\showISSN{2640-3498}


\bibitem[\protect\citeauthoryear{Buterin}{Buterin}{2024}]%
        {buterin2024InfoFinance}
\bibfield{author}{\bibinfo{person}{Vitalik Buterin}.} \bibinfo{year}{2024}\natexlab{}.
\newblock \bibinfo{title}{From prediction markets to info finance}.
\newblock \bibinfo{howpublished}{\url{https://vitalik.eth.limo/general/2024/11/09/infofinance.html}}.
\newblock
\urldef\tempurl%
\url{https://vitalik.eth.limo/general/2024/11/09/infofinance.html}
\showURL{%
\tempurl}
\newblock
\shownote{Accessed: \today.}


\bibitem[\protect\citeauthoryear{Casper, Davies, Shi, Gilbert, Scheurer, Rando, Freedman, Korbak, Lindner, Freire, Wang, Marks, Segerie, Carroll, Peng, Christoffersen, Damani, Slocum, Anwar, Siththaranjan, Nadeau, Michaud, Pfau, Krasheninnikov, Chen, Langosco, Hase, Biyik, Dragan, Krueger, Sadigh, and Hadfield-Menell}{Casper et~al\mbox{.}}{2023}]%
        {caspar2023openrlhf}
\bibfield{author}{\bibinfo{person}{Stephen Casper}, \bibinfo{person}{Xander Davies}, \bibinfo{person}{Claudia Shi}, \bibinfo{person}{Thomas~Krendl Gilbert}, \bibinfo{person}{J{\'e}r{\'e}my Scheurer}, \bibinfo{person}{Javier Rando}, \bibinfo{person}{Rachel Freedman}, \bibinfo{person}{Tomek Korbak}, \bibinfo{person}{David Lindner}, \bibinfo{person}{Pedro Freire}, \bibinfo{person}{Tony~Tong Wang}, \bibinfo{person}{Samuel Marks}, \bibinfo{person}{Charbel-Raphael Segerie}, \bibinfo{person}{Micah Carroll}, \bibinfo{person}{Andi Peng}, \bibinfo{person}{Phillip~J.K. Christoffersen}, \bibinfo{person}{Mehul Damani}, \bibinfo{person}{Stewart Slocum}, \bibinfo{person}{Usman Anwar}, \bibinfo{person}{Anand Siththaranjan}, \bibinfo{person}{Max Nadeau}, \bibinfo{person}{Eric~J Michaud}, \bibinfo{person}{Jacob Pfau}, \bibinfo{person}{Dmitrii Krasheninnikov}, \bibinfo{person}{Xin Chen}, \bibinfo{person}{Lauro Langosco}, \bibinfo{person}{Peter Hase}, \bibinfo{person}{Erdem Biyik}, \bibinfo{person}{Anca Dragan}, \bibinfo{person}{David Krueger}, \bibinfo{person}{Dorsa Sadigh}, {and} \bibinfo{person}{Dylan Hadfield-Menell}.} \bibinfo{year}{2023}\natexlab{}.
\newblock \showarticletitle{Open Problems and Fundamental Limitations of Reinforcement Learning from Human Feedback}.
\newblock \bibinfo{journal}{\emph{Transactions on Machine Learning Research}} (\bibinfo{year}{2023}).
\newblock
\showISSN{2835-8856}
\urldef\tempurl%
\url{https://openreview.net/forum?id=bx24KpJ4Eb}
\showURL{%
\tempurl}
\newblock
\shownote{Survey Certification, Featured Certification.}


\bibitem[\protect\citeauthoryear{Chen, Li, and Xu}{Chen et~al\mbox{.}}{2022}]%
        {chenSellingDataMachine2022}
\bibfield{author}{\bibinfo{person}{Junjie Chen}, \bibinfo{person}{Minming Li}, {and} \bibinfo{person}{Haifeng Xu}.} \bibinfo{year}{2022}\natexlab{}.
\newblock \showarticletitle{Selling {{Data To}} a {{Machine Learner}}: {{Pricing}} via {{Costly Signaling}}}. In \bibinfo{booktitle}{\emph{Proceedings of the 39th {{International Conference}} on {{Machine Learning}}}}. \bibinfo{publisher}{PMLR}, \bibinfo{pages}{3336--3359}.
\newblock
\showISSN{2640-3498}


\bibitem[\protect\citeauthoryear{Christiano, Shlegeris, and Amodei}{Christiano et~al\mbox{.}}{2018}]%
        {christiano2018IDA}
\bibfield{author}{\bibinfo{person}{Paul Christiano}, \bibinfo{person}{Buck Shlegeris}, {and} \bibinfo{person}{Dario Amodei}.} \bibinfo{year}{2018}\natexlab{}.
\newblock \bibinfo{title}{Supervising strong learners by amplifying weak experts}.
\newblock
\newblock
\showeprint[arxiv]{1810.08575}~[cs.LG]
\urldef\tempurl%
\url{https://arxiv.org/abs/1810.08575}
\showURL{%
\tempurl}


\bibitem[\protect\citeauthoryear{Conitzer}{Conitzer}{2009}]%
        {conitzerPredictionMarketsMechanism2012}
\bibfield{author}{\bibinfo{person}{Vincent Conitzer}.} \bibinfo{year}{2009}\natexlab{}.
\newblock \showarticletitle{Prediction markets, mechanism design, and cooperative game theory}. In \bibinfo{booktitle}{\emph{Proceedings of the Twenty-Fifth Conference on Uncertainty in Artificial Intelligence}} (Montreal, Quebec, Canada) \emph{(\bibinfo{series}{UAI '09})}. \bibinfo{publisher}{AUAI Press}, \bibinfo{address}{Arlington, Virginia, USA}, \bibinfo{pages}{101–108}.
\newblock
\showISBNx{9780974903958}


\bibitem[\protect\citeauthoryear{D{\"u}tting, Mirrokni, Paes~Leme, Xu, and Zuo}{D{\"u}tting et~al\mbox{.}}{2024}]%
        {duttingMechanismDesignLarge2024}
\bibfield{author}{\bibinfo{person}{Paul D{\"u}tting}, \bibinfo{person}{Vahab Mirrokni}, \bibinfo{person}{Renato Paes~Leme}, \bibinfo{person}{Haifeng Xu}, {and} \bibinfo{person}{Song Zuo}.} \bibinfo{year}{2024}\natexlab{}.
\newblock \showarticletitle{Mechanism {{Design}} for {{Large Language Models}}}. In \bibinfo{booktitle}{\emph{Proceedings of the {{ACM}} on {{Web Conference}} 2024}} \emph{(\bibinfo{series}{{{WWW}} '24})}. \bibinfo{publisher}{Association for Computing Machinery}, \bibinfo{address}{New York, NY, USA}, \bibinfo{pages}{144--155}.
\newblock
\showISBNx{9798400701719}
\urldef\tempurl%
\url{https://doi.org/10.1145/3589334.3645511}
\showDOI{\tempurl}


\bibitem[\protect\citeauthoryear{Fallah, Jordan, Makhdoumi, and Malekian}{Fallah et~al\mbox{.}}{2024}]%
        {fallah2024ThreeLayerDataMarkets}
\bibfield{author}{\bibinfo{person}{Alireza Fallah}, \bibinfo{person}{Michael Jordan}, \bibinfo{person}{Ali Makhdoumi}, {and} \bibinfo{person}{Azarakhsh Malekian}.} \bibinfo{year}{2024}\natexlab{}.
\newblock \showarticletitle{On Three-Layer Data Markets}.
\newblock \bibinfo{journal}{\emph{ArXiv}}  \bibinfo{volume}{abs/2402.09697} (\bibinfo{year}{2024}).
\newblock
\urldef\tempurl%
\url{https://api.semanticscholar.org/CorpusID:267682401}
\showURL{%
\tempurl}


\bibitem[\protect\citeauthoryear{Fallenstein and Soares}{Fallenstein and Soares}{2015}]%
        {vingeanReflection}
\bibfield{author}{\bibinfo{person}{Benja Fallenstein} {and} \bibinfo{person}{Nate Soares}.} \bibinfo{year}{2015}\natexlab{}.
\newblock \bibinfo{booktitle}{\emph{Vingean Reflection: Reliable Reasoning for Self-Improving Agents}}.
\newblock \bibinfo{type}{Technical Report} 2015-2. \bibinfo{institution}{MIRI}.
\newblock
\urldef\tempurl%
\url{https://intelligence.org/files/VingeanReflection.pdf}
\showURL{%
\tempurl}


\bibitem[\protect\citeauthoryear{Ghorbani and Zou}{Ghorbani and Zou}{2019}]%
        {ghorbani2019DataShapley}
\bibfield{author}{\bibinfo{person}{Amirata Ghorbani} {and} \bibinfo{person}{James Zou}.} \bibinfo{year}{2019}\natexlab{}.
\newblock \showarticletitle{Data Shapley: Equitable Valuation of Data for Machine Learning}. In \bibinfo{booktitle}{\emph{Proceedings of the 36th International Conference on Machine Learning}} \emph{(\bibinfo{series}{Proceedings of Machine Learning Research}, Vol.~\bibinfo{volume}{97})}, \bibfield{editor}{\bibinfo{person}{Kamalika Chaudhuri} {and} \bibinfo{person}{Ruslan Salakhutdinov}} (Eds.). \bibinfo{publisher}{PMLR}, \bibinfo{pages}{2242--2251}.
\newblock
\urldef\tempurl%
\url{https://proceedings.mlr.press/v97/ghorbani19c.html}
\showURL{%
\tempurl}


\bibitem[\protect\citeauthoryear{Goldreich}{Goldreich}{2001}]%
        {goldreich2001FoundationsCryptographyV1}
\bibfield{author}{\bibinfo{person}{O. Goldreich}.} \bibinfo{year}{2001}\natexlab{}.
\newblock \bibinfo{booktitle}{\emph{Foundations of Cryptography: Volume 1, Basic Tools}}.
\newblock \bibinfo{publisher}{Cambridge University Press}.
\newblock
\showISBNx{9780521791724}
\urldef\tempurl%
\url{https://books.google.co.uk/books?id=oo3RzgEACAAJ}
\showURL{%
\tempurl}


\bibitem[\protect\citeauthoryear{Halawi, Zhang, {Yueh-Han}, and Steinhardt}{Halawi et~al\mbox{.}}{2024}]%
        {halawiApproachingHumanLevelForecasting2024}
\bibfield{author}{\bibinfo{person}{Danny Halawi}, \bibinfo{person}{Fred Zhang}, \bibinfo{person}{Chen {Yueh-Han}}, {and} \bibinfo{person}{Jacob Steinhardt}.} \bibinfo{year}{2024}\natexlab{}.
\newblock \bibinfo{title}{Approaching {{Human-Level Forecasting}} with {{Language Models}}}.
\newblock
\newblock
\urldef\tempurl%
\url{https://doi.org/10.48550/arXiv.2402.18563}
\showDOI{\tempurl}
\showeprint[arxiv]{2402.18563}~[cs]


\bibitem[\protect\citeauthoryear{Hammond and {Adam-Day}}{Hammond and {Adam-Day}}{2024}]%
        {hammondNeuralInteractiveProofs2024}
\bibfield{author}{\bibinfo{person}{Lewis Hammond} {and} \bibinfo{person}{Sam {Adam-Day}}.} \bibinfo{year}{2024}\natexlab{}.
\newblock \showarticletitle{Neural {{Interactive Proofs}}}. In \bibinfo{booktitle}{\emph{{{ICML}} 2024 {{Next Generation}} of {{AI Safety Workshop}}}}.
\newblock


\bibitem[\protect\citeauthoryear{Hanson}{Hanson}{2002}]%
        {hansonLogarithmicMarketScoring2002}
\bibfield{author}{\bibinfo{person}{Robin Hanson}.} \bibinfo{year}{2002}\natexlab{}.
\newblock \showarticletitle{Logarithmic {{Market Scoring Rules}} for {{Modular Combinatorial Information Aggregation}}}.
\newblock \bibinfo{journal}{\emph{The Journal of Prediction Markets}} \bibinfo{volume}{1}, \bibinfo{number}{1} (\bibinfo{date}{January} \bibinfo{year}{2002}), \bibinfo{pages}{3--15}.
\newblock
\urldef\tempurl%
\url{https://doi.org/10.5750/jpm.v1i1.417}
\showDOI{\tempurl}


\bibitem[\protect\citeauthoryear{Hanson}{Hanson}{2011a}]%
        {hansonIPBarbedWire2011}
\bibfield{author}{\bibinfo{person}{Robin Hanson}.} \bibinfo{year}{2011}\natexlab{a}.
\newblock \bibinfo{title}{{{IP}}+ {{Like Barbed Wire}}?}
\newblock
\newblock


\bibitem[\protect\citeauthoryear{Hanson}{Hanson}{2011b}]%
        {hansonRahEfficientIP2011}
\bibfield{author}{\bibinfo{person}{Robin Hanson}.} \bibinfo{year}{2011}\natexlab{b}.
\newblock \bibinfo{title}{Rah {{Efficient IP}}}.
\newblock
\newblock


\bibitem[\protect\citeauthoryear{Howard}{Howard}{1966}]%
        {howard1966information}
\bibfield{author}{\bibinfo{person}{R. Howard}.} \bibinfo{year}{1966}\natexlab{}.
\newblock \showarticletitle{Information Value Theory}.
\newblock \bibinfo{journal}{\emph{IEEE Transactions on Systems Science and Cybernetics}} \bibinfo{volume}{2}, \bibinfo{number}{1} (\bibinfo{year}{1966}), \bibinfo{pages}{22--26}.
\newblock
\urldef\tempurl%
\url{https://doi.org/10.1109/tssc.1966.30007}
\showDOI{\tempurl}


\bibitem[\protect\citeauthoryear{Hubinger}{Hubinger}{2020a}]%
        {evanhubingerAISafetyMarket2020}
\bibfield{author}{\bibinfo{person}{Evan Hubinger}.} \bibinfo{year}{2020}\natexlab{a}.
\newblock \bibinfo{title}{{{AI}} Safety via Market Making --- {{LessWrong}}}.
\newblock
\newblock


\bibitem[\protect\citeauthoryear{Hubinger}{Hubinger}{2020b}]%
        {hubinger2020alignmentProposalsComplexityClasses}
\bibfield{author}{\bibinfo{person}{Evan Hubinger}.} \bibinfo{year}{2020}\natexlab{b}.
\newblock \bibinfo{title}{Alignment proposals and complexity classes}.
\newblock \bibinfo{howpublished}{\url{https://www.lesswrong.com/posts/N64THGX7XNCqRtvPG/alignment-proposals-and-complexity-classes}}.
\newblock
\urldef\tempurl%
\url{https://www.lesswrong.com/posts/N64THGX7XNCqRtvPG/alignment-proposals-and-complexity-classes}
\showURL{%
Retrieved \today from \tempurl}
\newblock
\shownote{Accessed: \today.}


\bibitem[\protect\citeauthoryear{Impagliazzo and Yung}{Impagliazzo and Yung}{1987}]%
        {zkPSPACE1}
\bibfield{author}{\bibinfo{person}{Russell Impagliazzo} {and} \bibinfo{person}{Moti Yung}.} \bibinfo{year}{1987}\natexlab{}.
\newblock \showarticletitle{Direct Minimum-Knowledge Computations}. In \bibinfo{booktitle}{\emph{A Conference on the Theory and Applications of Cryptographic Techniques on Advances in Cryptology}} \emph{(\bibinfo{series}{CRYPTO '87})}. \bibinfo{publisher}{Springer-Verlag}, \bibinfo{address}{Berlin, Heidelberg}, \bibinfo{pages}{40–51}.
\newblock
\showISBNx{3540187960}


\bibitem[\protect\citeauthoryear{Irving, Christiano, and Amodei}{Irving et~al\mbox{.}}{2018}]%
        {irvingAISafetyDebate2018}
\bibfield{author}{\bibinfo{person}{Geoffrey Irving}, \bibinfo{person}{Paul Christiano}, {and} \bibinfo{person}{Dario Amodei}.} \bibinfo{year}{2018}\natexlab{}.
\newblock \bibinfo{title}{{{AI}} Safety via Debate}.
\newblock
\newblock
\urldef\tempurl%
\url{https://doi.org/10.48550/arXiv.1805.00899}
\showDOI{\tempurl}
\showeprint[arxiv]{1805.00899}~[cs, stat]


\bibitem[\protect\citeauthoryear{Kuhn, Arrow, Barankin, Blackwell, Bott, Dalkey, Dresher, Gale, Gillies, Glicksberg, Gross, Karlin, Kuhn, Mayberry, Milnor, Motzkin, von Neumann, Raiffa, Shapley, Shiffman, Stewart, Thompson, and Thrall}{Kuhn et~al\mbox{.}}{1953}]%
        {kuhn1953information}
\bibfield{author}{\bibinfo{person}{H.~W. Kuhn}, \bibinfo{person}{K.~J. Arrow}, \bibinfo{person}{E.~W. Barankin}, \bibinfo{person}{D. Blackwell}, \bibinfo{person}{R. Bott}, \bibinfo{person}{N. Dalkey}, \bibinfo{person}{M. Dresher}, \bibinfo{person}{D. Gale}, \bibinfo{person}{D.~B. Gillies}, \bibinfo{person}{I. Glicksberg}, \bibinfo{person}{O. Gross}, \bibinfo{person}{S. Karlin}, \bibinfo{person}{H.~W. Kuhn}, \bibinfo{person}{J.~P. Mayberry}, \bibinfo{person}{J.~W. Milnor}, \bibinfo{person}{T.~S. Motzkin}, \bibinfo{person}{J. von Neumann}, \bibinfo{person}{H. Raiffa}, \bibinfo{person}{L.~S. Shapley}, \bibinfo{person}{M. Shiffman}, \bibinfo{person}{F.~M. Stewart}, \bibinfo{person}{G.~L. Thompson}, {and} \bibinfo{person}{R.~M. Thrall}.} \bibinfo{year}{1953}\natexlab{}.
\newblock \bibinfo{booktitle}{\emph{Extensive games and the problem of information}}.
\newblock \bibinfo{publisher}{Princeton University Press}, \bibinfo{pages}{193--216}.
\newblock
\showISBNx{9780691079356}
\urldef\tempurl%
\url{http://www.jstor.org/stable/j.ctt1b9x1zv.17}
\showURL{%
\tempurl}


\bibitem[\protect\citeauthoryear{Li, Gao, Li, Li, and Liao}{Li et~al\mbox{.}}{2024}]%
        {liEconAgentLargeLanguage2024a}
\bibfield{author}{\bibinfo{person}{Nian Li}, \bibinfo{person}{Chen Gao}, \bibinfo{person}{Mingyu Li}, \bibinfo{person}{Yong Li}, {and} \bibinfo{person}{Qingmin Liao}.} \bibinfo{year}{2024}\natexlab{}.
\newblock \showarticletitle{{{EconAgent}}: {{Large Language Model-Empowered Agents}} for {{Simulating Macroeconomic Activities}}}. In \bibinfo{booktitle}{\emph{Proceedings of the 62nd {{Annual Meeting}} of the {{Association}} for {{Computational Linguistics}} ({{Volume}} 1: {{Long Papers}})}}, \bibfield{editor}{\bibinfo{person}{Lun-Wei Ku}, \bibinfo{person}{Andre Martins}, {and} \bibinfo{person}{Vivek Srikumar}} (Eds.). \bibinfo{publisher}{Association for Computational Linguistics}, \bibinfo{address}{Bangkok, Thailand}, \bibinfo{pages}{15523--15536}.
\newblock
\urldef\tempurl%
\url{https://doi.org/10.18653/v1/2024.acl-long.829}
\showDOI{\tempurl}


\bibitem[\protect\citeauthoryear{Lindley}{Lindley}{1956}]%
        {lindley1956OnMeasureInformationExperiment}
\bibfield{author}{\bibinfo{person}{D.~V. Lindley}.} \bibinfo{year}{1956}\natexlab{}.
\newblock \showarticletitle{On a Measure of the Information Provided by an Experiment}.
\newblock \bibinfo{journal}{\emph{The Annals of Mathematical Statistics}} \bibinfo{volume}{27}, \bibinfo{number}{4} (\bibinfo{year}{1956}), \bibinfo{pages}{986--1005}.
\newblock
\showISSN{00034851, 21688990}
\urldef\tempurl%
\url{http://www.jstor.org/stable/2237191}
\showURL{%
\tempurl}


\bibitem[\protect\citeauthoryear{Manelli and Vincent}{Manelli and Vincent}{2006}]%
        {manelli2006Bundling}
\bibfield{author}{\bibinfo{person}{Alejandro~M. Manelli} {and} \bibinfo{person}{Daniel~R. Vincent}.} \bibinfo{year}{2006}\natexlab{}.
\newblock \showarticletitle{Bundling as an optimal selling mechanism for a multiple-good monopolist}.
\newblock \bibinfo{journal}{\emph{Journal of Economic Theory}} \bibinfo{volume}{127}, \bibinfo{number}{1} (\bibinfo{year}{2006}), \bibinfo{pages}{1--35}.
\newblock
\showISSN{0022-0531}
\urldef\tempurl%
\url{https://doi.org/10.1016/j.jet.2005.08.007}
\showDOI{\tempurl}


\bibitem[\protect\citeauthoryear{Paleka, Sudhir, Alvarez, Bhat, Shen, Wang, and Tram{\`e}r}{Paleka et~al\mbox{.}}{2024}]%
        {palekaConsistencyChecksLanguage2024}
\bibfield{author}{\bibinfo{person}{Daniel Paleka}, \bibinfo{person}{Abhimanyu~Pallavi Sudhir}, \bibinfo{person}{Alejandro Alvarez}, \bibinfo{person}{Vineeth Bhat}, \bibinfo{person}{Adam Shen}, \bibinfo{person}{Evan Wang}, {and} \bibinfo{person}{Florian Tram{\`e}r}.} \bibinfo{year}{2024}\natexlab{}.
\newblock \showarticletitle{Consistency {{Checks}} for {{Language Model Forecasters}}}. In \bibinfo{booktitle}{\emph{The {{Thirteenth International Conference}} on {{Learning Representations}}}}.
\newblock


\bibitem[\protect\citeauthoryear{Raiffa and Schlaifer}{Raiffa and Schlaifer}{1961}]%
        {raiffa1961AppliedStatisticalDecisionTheory}
\bibfield{author}{\bibinfo{person}{H. Raiffa} {and} \bibinfo{person}{R. Schlaifer}.} \bibinfo{year}{1961}\natexlab{}.
\newblock \bibinfo{booktitle}{\emph{Applied Statistical Decision Theory}}.
\newblock \bibinfo{publisher}{Division of Research, Graduate School of Business Adminitration, Harvard University}.
\newblock
\showISBNx{9780875840178}
\showLCCN{60011282}
\urldef\tempurl%
\url{https://books.google.co.uk/books?id=wPBLAAAAMAAJ}
\showURL{%
\tempurl}


\bibitem[\protect\citeauthoryear{Samuelson and Nordhaus}{Samuelson and Nordhaus}{2009}]%
        {samuelson2009economics}
\bibfield{author}{\bibinfo{person}{P.A. Samuelson} {and} \bibinfo{person}{W.D. Nordhaus}.} \bibinfo{year}{2009}\natexlab{}.
\newblock \bibinfo{booktitle}{\emph{Economics}}.
\newblock \bibinfo{publisher}{McGraw-Hill Education}.
\newblock
\showISBNx{9780073511290}
\showLCCN{2009003178}
\urldef\tempurl%
\url{https://books.google.co.uk/books?id=eS5ZAAAAYAAJ}
\showURL{%
\tempurl}


\bibitem[\protect\citeauthoryear{Schoenegger, Tuminauskaite, Park, and Tetlock}{Schoenegger et~al\mbox{.}}{2024}]%
        {schoeneggerWisdomSiliconCrowd2024}
\bibfield{author}{\bibinfo{person}{Philipp Schoenegger}, \bibinfo{person}{Indre Tuminauskaite}, \bibinfo{person}{Peter~S. Park}, {and} \bibinfo{person}{Philip~E. Tetlock}.} \bibinfo{year}{2024}\natexlab{}.
\newblock \bibinfo{title}{Wisdom of the {{Silicon Crowd}}: {{LLM Ensemble Prediction Capabilities Rival Human Crowd Accuracy}}}.
\newblock
\newblock
\urldef\tempurl%
\url{https://doi.org/10.48550/arXiv.2402.19379}
\showDOI{\tempurl}
\showeprint[arxiv]{2402.19379}~[cs]


\bibitem[\protect\citeauthoryear{Stigler}{Stigler}{1961}]%
        {stigler1961economics}
\bibfield{author}{\bibinfo{person}{George~J Stigler}.} \bibinfo{year}{1961}\natexlab{}.
\newblock \showarticletitle{The economics of information}.
\newblock \bibinfo{journal}{\emph{Journal of political economy}} \bibinfo{volume}{69}, \bibinfo{number}{3} (\bibinfo{year}{1961}), \bibinfo{pages}{213--225}.
\newblock


\bibitem[\protect\citeauthoryear{Sudhir, Kaunismaa, and Panickssery}{Sudhir et~al\mbox{.}}{2025}]%
        {sudhir2025ScalableOversightBenchmark}
\bibfield{author}{\bibinfo{person}{Abhimanyu~Pallavi Sudhir}, \bibinfo{person}{Jackson Kaunismaa}, {and} \bibinfo{person}{Arjun Panickssery}.} \bibinfo{year}{2025}\natexlab{}.
\newblock \showarticletitle{A Benchmark for Scalable Oversight Mechanisms}. In \bibinfo{booktitle}{\emph{ICLR 2025 Workshop on Bidirectional Human-AI Alignment}}.
\newblock
\urldef\tempurl%
\url{https://openreview.net/forum?id=mzLBxX84VI}
\showURL{%
\tempurl}


\bibitem[\protect\citeauthoryear{Tewolde, Zhang, Oesterheld, Zampetakis, Sandholm, Goldberg, and Conitzer}{Tewolde et~al\mbox{.}}{2024}]%
        {imperfectRecall2024}
\bibfield{author}{\bibinfo{person}{Emanuel Tewolde}, \bibinfo{person}{Brian~Hu Zhang}, \bibinfo{person}{Caspar Oesterheld}, \bibinfo{person}{Manolis Zampetakis}, \bibinfo{person}{Tuomas Sandholm}, \bibinfo{person}{Paul Goldberg}, {and} \bibinfo{person}{Vincent Conitzer}.} \bibinfo{year}{2024}\natexlab{}.
\newblock \showarticletitle{Imperfect-recall games: equilibrium concepts and their complexity}. In \bibinfo{booktitle}{\emph{Proceedings of the Thirty-Third International Joint Conference on Artificial Intelligence}} (Jeju, Korea) \emph{(\bibinfo{series}{IJCAI '24})}. Article \bibinfo{articleno}{332}, \bibinfo{numpages}{11}~pages.
\newblock
\showISBNx{978-1-956792-04-1}
\urldef\tempurl%
\url{https://doi.org/10.24963/ijcai.2024/332}
\showDOI{\tempurl}


\bibitem[\protect\citeauthoryear{Thomassen, Vassbø, Solheim-Kile, and Lohne}{Thomassen et~al\mbox{.}}{2016}]%
        {tenders}
\bibfield{author}{\bibinfo{person}{Kristine Thomassen}, \bibinfo{person}{Siril Vassbø}, \bibinfo{person}{Espen Solheim-Kile}, {and} \bibinfo{person}{Jardar Lohne}.} \bibinfo{year}{2016}\natexlab{}.
\newblock \showarticletitle{Public-Private Partnership: Transaction Costs of Tendering}.
\newblock \bibinfo{journal}{\emph{Procedia Computer Science}}  \bibinfo{volume}{100} (\bibinfo{year}{2016}), \bibinfo{pages}{818--825}.
\newblock
\showISSN{1877-0509}
\urldef\tempurl%
\url{https://doi.org/10.1016/j.procs.2016.09.230}
\showDOI{\tempurl}
\newblock
\shownote{International Conference on ENTERprise Information Systems/International Conference on Project MANagement/International Conference on Health and Social Care Information Systems and Technologies, CENTERIS/ProjMAN / HCist 2016.}


\bibitem[\protect\citeauthoryear{Van~Alstyne}{Van~Alstyne}{1999a}]%
        {vanAlstyne1999AproposalValuingInformationInstrumentalGoods}
\bibfield{author}{\bibinfo{person}{Marshall~V. Van~Alstyne}.} \bibinfo{year}{1999}\natexlab{a}.
\newblock \showarticletitle{A proposal for valuing information and instrumental goods}. In \bibinfo{booktitle}{\emph{Proceedings of the 20th International Conference on Information Systems}} (Charlotte, North Carolina, USA) \emph{(\bibinfo{series}{ICIS '99})}. \bibinfo{publisher}{Association for Information Systems}, \bibinfo{address}{USA}, \bibinfo{pages}{328–345}.
\newblock


\bibitem[\protect\citeauthoryear{Van~Alstyne}{Van~Alstyne}{1999b}]%
        {vanalstyneProposalValuingInformation1999}
\bibfield{author}{\bibinfo{person}{Marshall~V. Van~Alstyne}.} \bibinfo{year}{1999}\natexlab{b}.
\newblock \showarticletitle{A Proposal for Valuing Information and Instrumental Goods}. In \bibinfo{booktitle}{\emph{Proceedings of the 20th International Conference on {{Information Systems}}}} \emph{(\bibinfo{series}{{{ICIS}} '99})}. \bibinfo{publisher}{Association for Information Systems}, \bibinfo{address}{USA}, \bibinfo{pages}{328--345}.
\newblock


\bibitem[\protect\citeauthoryear{Weiss, Rahaman, Wuthrich, Bengio, Li, Sch{\"o}lkopf, and Pal}{Weiss et~al\mbox{.}}{2024}]%
        {weissRedesigningInformationMarkets2024}
\bibfield{author}{\bibinfo{person}{Martin Weiss}, \bibinfo{person}{Nasim Rahaman}, \bibinfo{person}{Manuel Wuthrich}, \bibinfo{person}{Yoshua Bengio}, \bibinfo{person}{Li~Erran Li}, \bibinfo{person}{Bernhard Sch{\"o}lkopf}, {and} \bibinfo{person}{Christopher Pal}.} \bibinfo{year}{2024}\natexlab{}.
\newblock \showarticletitle{Redesigning {{Information Markets}} in the {{Era}} of {{Language Models}}}. In \bibinfo{booktitle}{\emph{First {{Conference}} on {{Language Modeling}}}}.
\newblock


\bibitem[\protect\citeauthoryear{Wojcik, Hilgard, Judd, Mocanu, Ragain, Hunzaker, Coleman, and Baxter}{Wojcik et~al\mbox{.}}{2022}]%
        {birdwatch2022}
\bibfield{author}{\bibinfo{person}{Stefan Wojcik}, \bibinfo{person}{Sophie Hilgard}, \bibinfo{person}{Nick Judd}, \bibinfo{person}{Delia Mocanu}, \bibinfo{person}{Stephen Ragain}, \bibinfo{person}{M.~B.~Fallin Hunzaker}, \bibinfo{person}{Keith Coleman}, {and} \bibinfo{person}{Jay Baxter}.} \bibinfo{year}{2022}\natexlab{}.
\newblock \bibinfo{title}{Birdwatch: Crowd Wisdom and Bridging Algorithms can Inform Understanding and Reduce the Spread of Misinformation}.
\newblock
\newblock
\showeprint[arxiv]{2210.15723}~[cs.SI]
\urldef\tempurl%
\url{https://arxiv.org/abs/2210.15723}
\showURL{%
\tempurl}


\end{thebibliography}

\appendix

\section{Basic results about value-of-information}
\label{sec:thms}

We include a corrected and generalized version of an incorrectly-formulated result in the arXiv version of \citet{weissRedesigningInformationMarkets2024}. These lemmas are perhaps obvious, but the whole theory of value-of-information and information bazaars rests upon them. \Cref{lem:voinf} asserts that ``a Bayesian agent expects to gain from information''\footnote{From an AI alignment perspective, this may instead be phrased ``A Bayesian agent trusts a more informed version of itself'', and is a statistical-information version of \emph{Vingean reflection} or \emph{tiling agents} \citep{vingeanReflection}, the desire that (even logically non-omniscient) agents can trust smarter versions of themselves.}. \Cref{lem:voins} asserts that ``a Bayesian agent expects to gain from inspection''. Since the agent can always choose not to buy anything after inspecting, in the absence of transaction costs of inspection buyers \emph{will} want to inspect more and more at each step. %

\begin{lemma}[Bayesian agent expects to gain from information]
Assume a probability space $(\Samples,\Sigalg,\Prob)$, a set of choices $\Actions$ and a measurable utility function $U:\Samples\times\Actions\to\Reals$. Then for any random variable $I:\Samples\to\Reals$, we have:
\begin{equation*}
\Expectofu[I]{U(\argmax_x\Expectof{U(x)\mid I})} \ge
\max\Expectof{U(x)}
\end{equation*}
\label{lem:voinf}
\end{lemma}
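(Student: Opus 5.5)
The plan is the standard ``max of expectations is at most expectation of max'' argument, combined with the tower property. Fix any $x_0\in\Actions$; in particular we will take $x_0\in\argmax_x\Expectof{U(x)}$, or, if the maximum is not attained, an $\varepsilon$-optimal $x_0$, letting $\varepsilon\to0$ at the end. Write $x^*(I)=\argmax_x\Expectof{U(x)\mid I}$ for the posterior-optimal action, viewed as a $\sigma(I)$-measurable random choice. I will assume a measurable selection exists; otherwise I would work with $\varepsilon$-optimal measurable selections, and the same argument goes through.

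The key steps, in order, are as follows.
\begin{enumerate}
\item By the tower property, since $x^*(I)$ is $\sigma(I)$-measurable,
\[
\Expectofu[I]{U(x^*(I))}=\Expect\bigl[\Expectof{U(x^*(I))\mid I}\bigr]=\Expect\bigl[\max_x\Expectof{U(x)\mid I}\bigr].
\]
The last equality holds because, on each event $\{I=i\}$, the conditional expectation of $U$ at the $\sigma(I)$-measurable action $x^*(I)$ equals $\Expectof{U(x)\mid I=i}$ evaluated at $x=x^*(i)$. This uses regular conditional distributions, or the ``pull-out'' property for $\sigma(I)$-measurable arguments; for a finitely-valued $\Actions$ or $I$ it is immediate by summing over values.
\item Pointwise, $\max_x\Expectof{U(x)\mid I}\ge\Expectof{U(x_0)\mid I}$ almost surely.
\item Take expectations and apply the tower property again, which gives $\Expect\bigl[\Expectof{U(x_0)\mid I}\bigr]=\Expectof{U(x_0)}=\max_x\Expectof{U(x)}$.
\end{enumerate}

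Chaining these steps gives the inequality. The main obstacle is purely technical: step 1, justifying the pull-out of a random action inside the conditional expectation when $\Actions$ is uncountable. In that case the pointwise supremum over uncountably many versions of $\Expectof{U(x)\mid I}$ need not be well-defined or measurable.

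I would first handle this by assuming $\Actions$ is countable, or that $U$ is integrable and jointly measurable so that a regular conditional distribution of $\omega$ given $I$ exists (true since $I$ is real-valued). We can then define $\Expectof{U(x)\mid I=i}=\int U(\omega,x)\,\Prob(d\omega\mid I=i)$ simultaneously for all $x$. With that in place, steps 1 through 3 are routine.
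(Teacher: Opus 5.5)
Your proposal is correct and follows the same route as the paper's proof. Both use the pointwise inequality $\Expectof{U(x^*(I))\mid I}\ge\Expectof{U(x_0)\mid I}$, then take expectations and apply the law of total expectation with $x_0$ chosen as the prior-optimal action; your measurable-selection, $\varepsilon$-optimality and regular-conditional-distribution remarks are technical details the paper assumes implicitly, not a different argument.
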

\begin{proof}
For all values of $I=i$ and for all $x_0\in\Actions$, we have (from the definition of $\argmax$):
\begin{multline*}
\Expectof{U(\argmax_x \Expectof{U(x)\mid I=i})\mid I=i} \\
\ge \Expectof{U(x_0)\mid I=i}
\end{multline*}
We take the expectation over $i\sim I$ on both sides, apply the law of total expectation and set $x_0=\argmax\Expectof{U(x)}$ to obtain our result.
\end{proof}

\begin{lemma}[Bayesian agent expects to gain from inspection]
Let the recursive construction of $\Actions^n$, $U^n$, $\InfoOffers^n$,$x_*^n$ be as in \Cref{sec:bayes}. Then $\forall n$, $\Expectof{U^{n+1}(x^{n+1}_*)}\ge 0$. The same also applies to finite-depth inspections i.e. $\Expectof{U^{n+1}(x^{n+1}_{*:N})}\ge 0$.
\label{lem:voins}
\end{lemma}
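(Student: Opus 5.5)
The plan is to reduce \Cref{lem:voins} to \Cref{lem:voinf} applied to the decision problem $\Actions^n$ with utility $U^n$. By \Cref{eq:voi-rim}, for any offer $\infogood=\langle I,i,p\rangle\in\infooffers^n$ we have
\[
U^{n+1}(\infogood)=U^n\Big(\argmax_{x\in\Actions^n}\Expectof{U^n(x)\mid I}\Big)-U^n\Big(\argmax_{x\in\Actions^n}\Expectof{U^n(x)}\Big)-p .
\]
The first step is to observe that the null offer $\zero$ (trivial random variable, zero price) always lies in $\infooffers^n$. It gives $U^{n+1}(\zero)=0$ identically, because conditioning on a trivial variable leaves the argmax unchanged.

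Next, I would show that $\Expectof{U^{n+1}(x^{n+1}_*)}\ge 0$ follows from two facts. First, $x^{n+1}_*$ is chosen as a conditional-expectation maximiser over a set containing $\zero$, so $\Expectof{U^{n+1}(x^{n+1}_*)\mid \mathcal{G}}\ge \Expectof{U^{n+1}(\zero)\mid\mathcal{G}}=0$. Here $\mathcal{G}$ is the $\sigma$-algebra the choice is measurable against; for the infinite recursion \Cref{eq:rim-choices} this is the information in $x^{n+2}_*$. Second, taking expectations and applying the tower law, exactly as in the proof of \Cref{lem:voinf}, gives the unconditional inequality.

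As a sanity check, I would also note the \Cref{lem:voinf} content directly. For each fixed offer, the expected value of the two $U^n$ terms is nonnegative by \Cref{lem:voinf}, using $\Actions=\Actions^n$, $U=U^n$ and the variable $I$. Only the price $p$ can make it negative, which is why the comparison against $\zero$ (rather than against an arbitrary offer) is what makes the bound hold.

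For the finite-depth version, I would run the same argument on \Cref{eq:rim-choices-fin}. At level $N$, $x^N_{*:N}$ maximises the unconditional expectation over a set containing $\zero$, which gives the bound directly. For $n+1<N$, $x^{n+1}_{*:N}$ maximises $\Expectof{U^{n+1}(x)\mid x^{n+2}_{*:N}}$, so the pointwise comparison with $\zero$ plus the tower property gives the bound at every level.

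The main obstacle is measurability and well-definedness. The choice $x^{n+1}_*$ is itself random, determined by the realised higher-level purchases, and $U^{n+1}$ is defined through argmaxes that may not be unique or measurable. I would first fix a measurable tie-breaking selection for each argmax; with finite offer sets this is straightforward. I would then treat the infinite recursion only through its finite truncations, so that every conditional expectation is taken with respect to a well-defined finite $\sigma$-algebra.
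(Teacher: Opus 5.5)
Your proposal is correct and is essentially the paper's argument. The paper also uses the fact that $x^{n+1}_*$ (resp.\ $x^{n+1}_{*:N}$) maximises the conditional expectation of $U^{n+1}$ over a choice set containing the null offer $\zero$, whose value is identically $0$, and then takes expectations over the conditioning information. Your detour through \Cref{lem:voinf} and your remarks on measurability and tie-breaking are extra care the paper omits, not a different route.
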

\begin{proof}
\Cref{eq:rim-choices} implies that $\Expectof{U^n\left(x^n_*\mid x^{n+1}_*\right)}$ is $\ge$ any version of the same expression with $x^n_*$ replaced by any $x\in\Actions^n$. We choose $x=\zero$, for which the expression evaluates to 0, and take the expectation over $x^{n+1}_*$.
\end{proof}

\end{document}